\theoremstyle{theorem}
\newtheorem{thm}{Theorem}
\newtheorem{prop}[thm]{Proposition}
\newtheorem{cor}[thm]{Corollary}
\theoremstyle{definition}
\newtheorem{defn}[thm]{Definition}
\newtheorem{rmk}[thm]{Remark}
\newtheorem{eg}[thm]{Example}
\newtheorem*{thm*}{Theorem}
\def\on{\operatorname}
\def\bb{\mathbb}
\def\CC{\bb{C}}
\def\RR{\bb{R}}
\def\ZZ{\bb{Z}}
\def\CP{\on{CP}}
\def\lL{\mathcal{L}}
\def\hH{\mathcal{H}}
\def\kK{\mathcal{K}}
\newcommand{\Span}[1]{\ensuremath{ \langle #1 \rangle }}
\newcommand{\one}{\mathbb{1}}  
\newcommand{\zero}{\mathbb{0}}  
\newcommand{\set}[1]{\ensuremath{ \lbrace #1 \rbrace }}
\def\myvdots{\ \vdots\ }
\newcommand{\Herm}{\on{Herm}}  
\newcommand{\Tr}{\on{Tr}}  
\newcommand{\Conv}{\on{Conv}}
\newcommand{\gettikzxy}[3]{%
	\tikz@scan@one@point\pgfutil@firstofone#1\relax
	\edef#2{\the\pgf@x}%
	\edef#3{\the\pgf@y}%
}
\newcommand{\Block}[4]{
	\node (#4) at (#1,#2) [draw,thick, minimum width=3em,minimum height=3em] {#3};}
\newcommand{\VOut}[3]{
	($(#1.center)+({-3+((6*#3)/(#2+1))},-3)$)
}
\newcommand{\VIn}[3]{
	($(#1.center)+({-3+((6*#3)/(#2+1))},3)$)
}
\newcommand{\HIn}[3]{
	($(#1.center)+(-3,{3-((6*#3)/(#2+1))})$)
}
\newcommand{\HOut}[3]{
	($(#1.center)+(3,{3-((6*#3)/(#2+1))})$)
}
\newcommand{\HDraw}[8]{
	\draw let 
	\p1=\HOut{#1}{#2}{#3},
	\p2 =\HIn{#4}{#5}{#6}
	in 
	(\x1,\y1) to[out=#7,in=#8] (\x2,\y2);	
}
\newcommand{\VDraw}[8]{
	\draw[dashed] let 
	\p1=\VOut{#1}{#2}{#3},
	\p2 =\VIn{#4}{#5}{#6}
	in 
	(\x1,\y1) to[out=#7,in=#8] (\x2,\y2);	
}
\newcommand{\VIndraw}[4]{
	\draw[dashed] 
	let 
	\p1=\VIn{#1}{#2}{#3},
	in 
	(\x1,\y1) to ($(\x1,\y1)+(0,#4)$);}
\newcommand{\VOutdraw}[4]{
	\draw[dashed] 
	let 
	\p1=\VOut{#1}{#2}{#3},
	in 
	(\x1,\y1) to ($(\x1,\y1)+(0,-#4)$);}
\newcommand{\HIndraw}[4]{
	\draw
	let 
	\p1=\HIn{#1}{#2}{#3},
	in 
	(\x1,\y1) to ($(\x1,\y1)+(-#4,0)$);}
\newcommand{\HOutdraw}[4]{
	\draw
	let 
	\p1=\HOut{#1}{#2}{#3},
	in 
	(\x1,\y1) to ($(\x1,\y1)+(#4,0)$);}
\newcommand{\FCircle}[3]{%
  \node[circle,draw,fill=black,minimum size=0.5em,inner sep=0pt,transform shape]
    (#3) at (#1,#2) {};%
}
\newcommand{\CVIn}[3]{(#1.north)}  
\newcommand{\CVOut}[3]{(#1.south)} 
\newcommand{\VDrawBC}[8]{%
  \draw[dashed] let \p1=\VOut{#1}{#2}{#3}, \p2=\CVIn{#4}{#5}{#6}
  in (\x1,\y1) to[out=#7,in=#8] (\x2,\y2);%
}
\newcommand{\VDrawCB}[8]{%
  \draw[dashed] let \p1=\CVOut{#1}{#2}{#3}, \p2=\VIn{#4}{#5}{#6}
  in (\x1,\y1) to[out=#7,in=#8] (\x2,\y2);%
}
\newcounter{commcount}\setcounter{commcount}{0}
\newcommand{{\LocP}}{\textsc{LocalPauli}}
\newcommand{{\Lam}}{\textsc{Lambda}}
\newcommand{{\CNC}}{\textsc{Cnc}}
\newcommand{{\JW}}{\textsc{JordanWigner}} 
\newcommand{{\LocC}}{\textsc{LocalClosed}} 
\newcommand{{\Det}}{\textsc{Deterministic}} 
\newcommand{{\MaxW}}{\textsc{MaxWeight}} 
\newcommand{{\Wig}}{\textsc{Wigner}} 
\newcommand{{\Stab}}{\textsc{Stabilizer}}
\tikzset{
  mid reverse triangle/.style={
    draw=black,
    postaction={
      decorate,
      decoration={
        markings,
        mark=at position 0.5 with {\arrow[rotate=180]{triangle 45}}
      }
    } 
  }
}
\tikzset{
  mid triangle/.style={
    draw=black,
    postaction={
      decorate,
      decoration={
        markings,
        mark=at position 0.5 with {\arrow{triangle 45}}
      }
    } 
  }
}
\title{Polyhedral Classical Simulators for Quantum Computation} 
\begin{document}
 
\author{Cihan Okay\footnote{cihan.okay@bilkent.edu.tr}} 
\affil{{\small{Department of Mathematics, Bilkent University, Ankara, Turkey}}}

	\maketitle

\begin{abstract}
Quantum advantage in computation refers to the existence of computational tasks 
that can be performed efficiently on a quantum computer but cannot be efficiently 
simulated on any classical computer. Identifying the precise boundary of efficient classical simulability is a central challenge and motivates the development of new simulation paradigms.  
In this paper, we introduce {polyhedral classical simulators}, a framework for classical simulation grounded in polyhedral geometry.
This framework encompasses well-known methods such as the Gottesman--Knill algorithm, while also extending naturally to more recent models of quantum computation, including those based on magic states and measurement-based quantum computation.  
We show how this framework unifies and extends existing simulation methods while at the same time providing a geometric roadmap for pushing the boundary of efficient classical simulation further. 
\end{abstract}

\tableofcontents

\section{Introduction}
\label{sec:intro}

A quantum computer operates using the principles of quantum mechanics. These operations provide new computational phenomena not available in the classical realm. Understanding exactly which features of quantum theory yield such phenomena---in particular, speedup in computational tasks---is marked as a quantum advantage. Such an advantage usually comes from foundational aspects of quantum theory, such as non-locality and entanglement, hence providing a separation from the classical world, and manifests itself in quantum algorithms that cannot be efficiently simulated by any classical computer. Our main focus in this paper is quantum advantage in computation, that is, when efficient classical simulation fails. As we will see, foundational features of quantum theory enter naturally as first indicators of potential quantum advantage.

In this paper, we introduce a class of classical simulators for quantum computation, providing a rigorous geometric framework based on polyhedral methods. The basic idea is to construct a classical state space consisting of the vertices---that is, the extremal points---of a convex polytope, which we refer to as the \emph{simulation polytope}. The classical algorithm initiates by sampling from this state space and proceeds by probabilistically updating these classical states. In this way, the algorithm can reproduce the Born rule probabilities obtained from the quantum algorithm. The choice of polytope depends on the computational model used to implement the quantum algorithm. Then, the simulation polytope is defined in a dual fashion to the measurement operations that appear in the computational model. Geometrically, two cases can arise: (1) the simulation polytope does not contain the convex set of all quantum states, also known as density operators, or (2) the polytope contains all quantum states. The former does not yield an initial probability distribution if the quantum computation starts at a quantum state outside the polytope. In this case, the initial distribution is a quasi-probability distribution, allowing some negative values. Depending on these two situations, we can distinguish two broad classes of simulators (see Figure \ref{fig:poly}):  
\begin{itemize}
\item \textbf{Efficient Estimators:} A quasi-probabilistic simulator that \emph{estimates} Born rule probabilities and performs efficient sampling when the initial quantum state lies within the simulation polytope.  
\item \textbf{Universal Samplers:} A probabilistic simulator that \emph{samples} from Born rule distributions, with a simulation polytope that contains all quantum states.  
\end{itemize}
The geometric restriction in the first case imposes limits on the class of quantum circuits that can be handled by the resulting polyhedral simulator.

\begin{figure}[ht]  
\centering 
\begin{tikzpicture}[scale=2, every node/.style={circle,inner sep=1pt}]

\def\n{7}
\def\radius{1}
\def\inradius{0.8}

\foreach \i in {1,...,8} {
  \node[fill=red]  (A\i) at ({\radius*cos(360/8*(\i-1))}, {\radius*sin(360/8*(\i-1))}) {};
}

\foreach \i in {1,...,7} {
  \pgfmathtruncatemacro{\next}{\i+1}
  \draw[gray!70, thick] (A\i) -- (A\next);
}
\draw[gray!70, thick] (A8) -- (A1);

\draw[black, thick,fill=gray!20] (0,0) circle (\inradius);

\foreach \i/\j in {1/3, 3/5, 5/7, 7/1} {
  \draw[black, thick] (A\i) -- (A\j);
}
\foreach \i in {1,3,5,7} {
  \fill[blue] (A\i) circle (1pt);
}

\foreach \i in {2,4,6,8} {
  \fill[red] (A\i) circle (1pt);
}

\draw[->,dashed, thick, blue!50!black, rounded corners=10pt]
     (A3) .. controls +(1,.5) and +(.5,.5) .. (A1);

\draw[->, dashed, thick, blue!50!black, rounded corners=10pt]
     (A3) .. controls +(-1,.6) and +(.8,.8) .. (A3);
 
\end{tikzpicture} 
\caption{\label{fig:poly}  
Geometry of quantum states and simulation polytopes. In the case of an efficient estimator (shown as a diamond with blue vertices), the polytope covers only part of the quantum state space (depicted as a disk). A universal sampler (depicted as the larger polytope), by contrast, contains the entire quantum state space. Red vertices indicate iterative progress toward probabilistically simulating larger regions. The update rules specify a probabilistic choice of the next vertex---for example, remaining at the same vertex with probability $p$, or moving to another vertex with probability $1-p$.
} 
\end{figure}
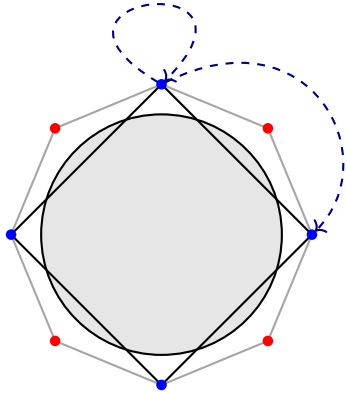 

The classic example of an efficient estimator is the stabilizer tableau simulator of Gottesman--Knill \cite{gottesman1999group22}, which we review in Section \ref{sec:quantum circuits}. The simulation polytope in this case is the convex hull of the stabilizer states, forming the \emph{stabilizer polytope} $\on{SP}_n$. These are special types of quantum states defined using a finite group, known as the Pauli group in the quantum computing literature. The algebraic nature of this theory makes the quantum operations highly controllable and tractable. This subtheory has many applications in quantum computing, including quantum error correction \cite{nielsen2010quantum}. A quantum circuit built within the stabilizer theory is called a \emph{stabilizer circuit}. The tableau algorithm of Gottesman--Knill can efficiently simulate any stabilizer circuit. It is well known that stabilizer circuits are not \emph{universal}, in the sense that not every quantum computation can be performed using them.  Bravyi--Kitaev \cite{bravyi2005universal} observed that stabilizer circuits can be extended to a universal model of quantum computation, known as quantum computation with magic states (QCM). In this model, if a stabilizer circuit is allowed to begin with a special type of quantum state called a \emph{magic state}---lying outside the stabilizer polytope---then the resulting class of circuits becomes universal. This naturally raises the question of whether there exists a polytope that contains both the stabilizer polytope and these magic states.

Several attempts have been made to enlarge the stabilizer polytope in order to capture more quantum states. The first approach, introduced in \cite{veitch2012negative}, replaces stabilizer states with operators originating from quantum optics.
 The corresponding simulation algorithm is based on the \emph{Wigner polytopes} $\on{WP}_n$. However, this method works only for qudits of odd local dimension and, in particular, fails for qubits. Qudits are described by a Hilbert space of the form $(\CC^d)^{\otimes n}$, where $d$ is referred to as the \emph{local dimension}, with the special case $d=2$ corresponding to qubits.  
A more general phase-space simulation method, based on \emph{closed non-contextual operators} (CNC), was later developed in \cite{raussendorf2020phase} for qubits, forming the \emph{CNC polytope} $\on{CP}_n$. Its qudit extension \cite{zurel2024efficient} provides a strict generalization of the Wigner simulation. In \cite{ipek2025phase}, the authors, including the present author, introduced a tableau method for phase-space simulation that represents CNC operators and their updates, extending the stabilizer tableau method. This algorithm has been implemented in \cite{CNCSim}.

The motivation for polyhedral simulators becomes apparent with two key examples. The first universal sampling simulator was introduced in \cite{zurel2020hidden} for qubits and later extended in \cite{zurel2021hidden} to qudits. The associated simulation polytope will be referred to as the \emph{Pauli polytope}, denoted $\on{P}_n$.%
\footnote{In \cite{zurel2020hidden}, the notation $\Lambda_n$ was used, and the polytope was referred to as the “Lambda polytope.” We prefer the term Pauli polytope, as it highlights the type of measurements that define the polytope. Moreover, in its local version the polytope is called the {local Pauli polytope} \cite{okay2024classical}.} 
It contains all quantum states, and the computational model here is QCM, where the measurements are Pauli measurements.
A more recent development is the universal sampler of \cite{okay2024classical}, whose simulation polytope is called the \emph{local Pauli polytope}. In this model, computation proceeds with Pauli measurements acting on a single qubit, making it a local variant of QCM. More precisely, this construction realizes an instance of measurement-based quantum computation (MBQC), another fundamental scheme originally introduced by Raussendorf--Briegel \cite{raussendorf2001one}.

In Section \ref{sec:adaptive quantum computation}, we introduce a formulation of adaptive quantum computation using \emph{adaptive instruments}. In quantum theory, instruments provide a formalization of quantum operations \cite{watrous2018theory}, including unitary transformations and quantum measurements. We further allow for adaptivity, meaning that the choice of an instrument at a later stage may depend on the outcome of an earlier instrument. This framework provides a uniform treatment of the circuit model as well as the two other universal models, QCM and MBQC. In particular, we focus on the Pauli and local Pauli models, which serve as the computational models underlying the definitions of the Pauli and local Pauli polytopes.  
In Section \ref{sec:polyhedral classical simulation}, we introduce the notion of \emph{polyhedral classical simulation}, which encapsulates the constructions shown in Figure \ref{fig:polytope-hierarchy}. As a visual aid, it is helpful to depict adaptive instruments and the components of classical simulation as diagrams consisting of boxes with vertical/horizontal inputs and vertical/horizontal outputs. The horizontal and vertical directions track the quantum and classical components of the adaptive computation, respectively. This double input/output structure can be formalized using \emph{double categories} \cite{double}.

The goal of the polyhedral classical simulators research program is to push the boundaries of probabilistic simulation further, by exploring new vertices of universal samplers (indicated as red vertices in Figure \ref{fig:poly}). For example, starting from $\on{CP}_n$, one may ask: which quantum states lie inside this polytope? The phase space tableau algorithm of \cite{ipek2025phase} provides an efficient simulation for the corresponding circuits. Two facts follow immediately from the geometry: (1) the polytope contains quantum states that are not stabilizer states, and (2) there are quantum states that do not lie within this polytope. Understanding the first extends the class of efficiently simulatable circuits beyond stabilizer circuits, while the second gives a clear boundary for quantum advantage, provided by circuits that fall outside this class.  
There may still exist quantum circuits that are efficiently simulatable beyond this particular class, so the boundary of efficient simulatability can in principle be pushed further. Indeed, progress in this direction is illustrated in Figure \ref{fig:polytope-hierarchy}: the \emph{Jordan--Wigner polytope} $\on{JW}_n$, consisting of line graph operators \cite{zurel2023simulation}, provides new vertices of the Pauli polytope, while on the local Pauli side the \emph{deterministic polytope} $\on{DP}_n$ and the \emph{max-weight polytope} $\on{MP}_n$ introduce vertices described using quantum contextuality and non-locality \cite{okay2024classical}.

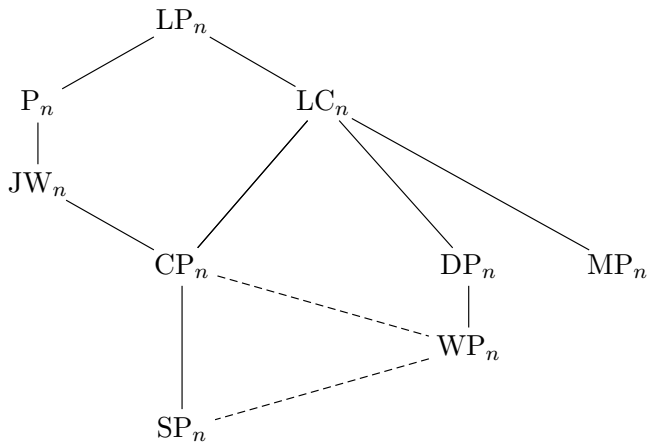
\begin{figure}[ht]
    \centering
{
\tikzcdset{every arrow/.append style={shorten <=3pt,shorten >=3pt}}
\begin{tikzcd}[row sep=2em, column sep=3em]
 & \on{LP}_n \arrow[dl,no head] \arrow[dr,no head] & &\\
\on{P}_n \arrow[d,no head] & & \on{LC}_n \arrow[ddl,no head] \arrow[ddr,no head] \arrow[ddl,no head] \arrow[ddrr,no head] &\\
\on{JW}_n \arrow[dr,no head] & && &\\
 & \on{CP}_n \arrow[dd,no head] \arrow[drr,no head,dashed] & & \on{DP}_n \arrow[d,no head] & \on{MP}_n  \\
 &  & & \on{WP}_n \arrow[dll,no head,dashed]&\\
 & \on{SP}_n && &
\end{tikzcd}
}
 \caption{Containment relation of simulation polytopes. Dashed arrows only hold for qudits of odd local dimension.}
    \label{fig:polytope-hierarchy}
\end{figure}

On the quantum foundations side, the key notions of contextuality and non-locality are most naturally formulated within the framework of \emph{simplicial distributions}, introduced in \cite{sdist} using simplicial methods from algebraic topology. Both the Pauli polytope and the local Pauli polytope can be expressed as special cases of polytopes of simplicial distributions: the latter coincides with the well-known Bell scenarios \cite{okay2024classical}, while the former can be described as a polytope of twisted simplicial distributions \cite{okay2024twisted}. Techniques from this theory have proven especially effective in addressing the vertex-enumeration problem \cite{kharoof2023topological,kharoof2024homotopical,kharoof2024extremal}, which, as discussed above, is a fundamental challenge for the state spaces of these simulators. 
In this paper, we focus on the simulation aspects. At the same time, the natural appearance of foundational notions highlights a systematic connection to the theory of simplicial distributions. This connection forms a unified framework for studying the foundations of quantum advantage alongside polyhedral classical simulation.

As a reading guide, Section \ref{sec:quantum circuits} begins with a conventional introduction to quantum circuits and highlights how the stabilizer subtheory can be efficiently simulated using $\ZZ_2$-linear operations. Section \ref{sec:adaptive quantum computation} then develops a more general framework of quantum computation that encompasses various standard models. This section marks the starting point of our rigorous and systematic treatment of the subject, whereas the preceding section is intended to convey the idea of simulation in its simplest form. Finally, in Section \ref{sec:polyhedral classical simulation}, we introduce our polyhedral simulation framework, maintaining the same level of rigor. The simulators discussed here include several prominent methods from the quantum computing literature, such as the Gottesman--Knill simulation and its more recent extensions.

\paragraph{Acknowledgments.}
This work is supported by the Air Force Office of Scientific Research (AFOSR) under award number  FA9550-24-1-0257 and the Digital Horizon Europe project FoQaCiA, GA no. 101070558. The author thanks Selman Ipek for valuable feedback on an earlier draft.

\paragraph{Data availability.}
Data sharing is not applicable to this article, as no datasets were generated or analyzed during the study.
 
\section{Quantum computation}
\label{sec:quantum circuits}

Quantum theory is defined on a Hilbert space $\hH$, which in quantum computing 
is typically taken to be finite-dimensional. It consists of three essential 
components: \emph{states}, \emph{transformations}, and \emph{measurements}.  
These are described using linear operators $A:\hH\to \hH$. An operator is 
called \emph{positive semidefinite} if it can be expressed as $A=B^\dagger B$ 
for some operator $B$, where $B^\dagger$ denotes the adjoint of $B$. We write 
$\on{Pos}(\hH)$ for the set of positive semidefinite operators.
\begin{itemize}
\item {\bf States:} A quantum state is given by a density operator, i.e., a 
trace-one positive semidefinite operator. We write $\on{Den}(\hH)$ for the set 
of density operators. 
\item {\bf Transformations:} Transformations are given by unitary operators on 
$\hH$, which form the unitary group $U(\hH)$.
\item {\bf Measurements:} A (projective) quantum measurement with finite outcome 
set $\Sigma$ is a function $\Pi:\Sigma\to \on{Proj}(\hH)$ assigning to each 
$s\in\Sigma$ a projection operator $\Pi^s$ such that
\[
\sum_{s\in \Sigma} \Pi^s = \one,
\]
where $\one$ denotes the identity operator. 
\end{itemize}
{Transformations act by conjugation on states. The action of measurements on states is probabilistic:}
Given a state $\rho$ and a measurement $\Pi$, the Born rule provides a 
probability distribution
\[
p:\Sigma \to \RR_{\geq 0}
\]
specifying the probability of observing outcome $s\in\Sigma$:
\[
p^s = \Tr(\rho\,\Pi^s).
\]
After observing outcome $s$, the state updates to the \emph{post-measurement 
state}
\[
\rho' = \frac{\Pi^s \rho \Pi^s}{p^s}.
\]

Quantum computation is performed using qubits, the quantum analogue of bits 
$\ZZ_2=\set{0,1}$. The \emph{$n$-qubit Hilbert space} is the $n$-fold tensor 
product $(\CC^2)^{\otimes n}$. It is common practice to use Dirac notation, 
writing the canonical basis vectors as $\ket{s}$, where $s=s_1s_2\cdots s_n$ is 
a bit string. The adjoint of this vector is denoted by $\bra{s}$. 
Explicitly, the quantum circuit in Figure~\ref{fig:circuit} associates a 
probability distribution to each input state. For a set $X$, let $D(X)$ denote 
the set of probability distributions on $X$. For a given unitary $U$, this 
association gives a function
\[
p:\ZZ_2^n \to D(\ZZ_2^n)
\] 
defined by the Born rule
\begin{equation}\label{eq:Born}
p_{s}^r = \Tr(U \Pi^{s} U^\dagger \Pi^{r}),
\end{equation}
where $\Pi^s=\ket{s}\bra{s}$ and $\Pi^r=\ket{r}\bra{r}$ are the projectors onto 
the basis vectors $\ket{s}$ and $\ket{r}$, respectively. {For each $s\in \ZZ_2^n$, we denote by $p_s$ the corresponding distribution in $D(\ZZ_2^n)$, and by $p_s^r$ its value at $r\in \ZZ_2^n$.}
The post-measurement 
state when $r$ is observed is the projector $\Pi^r$.
We say that the quantum circuit computes a function $f:\ZZ_2^n\to \ZZ_2^n$ if 
$p$ factors as the composite
\[
p:\ZZ_2^n \xrightarrow{f} \ZZ_2^n \xrightarrow{\delta} D(\ZZ_2^n),
\]
where for a set $X$, the function $\delta:X\to D(X)$ sends each element to the 
delta distribution peaked at that element. In certain cases, it is possible to 
produce the distribution $p_s$ for each input by classical means efficiently. 
Such cases are regarded as giving no quantum advantage.

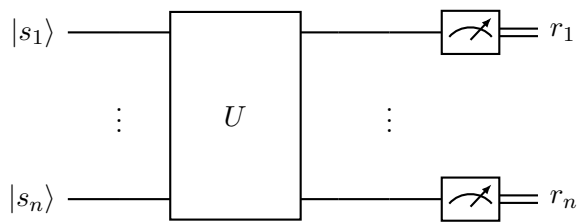
\begin{figure}[h]  
    \centering
    \begin{quantikz} 
        \lstick{$\ket{s_1}$} & \qw & \gate[3,nwires=2]{\hspace{1.5em} U \hspace{1.5em}} &  \qw &  \qw & \meter{} & \cw \rstick{$r_1$}\\
         &  \myvdots    &    &      &   \myvdots  \\
        \lstick{$\ket{s_n}$}  &\qw &   & \qw & \qw & \meter{} & \cw \rstick{$r_n$}
    \end{quantikz}
    \caption{Quantum circuit. The input state is the canonical basis vector 
$\ket{s_1\cdots s_n}=\ket{s_1}\otimes \cdots\otimes \ket{s_n}$ 
with corresponding projector $\Pi^s$. The unitary transforms this state to 
$U\ket{s_1\cdots s_n}$, corresponding to the projector $U\Pi^s U^\dagger$. 
Finally, a measurement is performed with projectors 
$\{\Pi^r : r=r_1\cdots r_n\}$.
}
    \label{fig:circuit}
\end{figure}

\subsection{Stabilizer theory} 
\label{sec:stabilizer theory}
 
Quantum computational power can be analyzed from a group-theoretic perspective. 
The Pauli matrices
\[ 
X = \begin{pmatrix} 0 & 1 \\ 1 & 0\end{pmatrix},\quad
Y = \begin{pmatrix} 0 & -i \\ i & 0\end{pmatrix},\quad
Z = \begin{pmatrix} 1 & 0 \\ 0 & -1\end{pmatrix}
\]
form the starting point for constructing an important finite subgroup. 
The \emph{$n$-qubit Pauli group} $G_n$ is the subgroup of 
$U((\CC^2)^{\otimes n})$ generated by tensor products 
$A_1\otimes A_2\otimes \cdots \otimes A_n$, where each $A_i$ is one of 
$X, Y, Z$, or the identity $\one$.

\begin{figure}[h] 
    \centering
\begin{tikzpicture}[baseline=(current bounding box.center)]

\node (G5) at (0,8) {\( U(\mathcal{H}) \)}; 
\node (G4) at (0,6) {\( \langle H_i, T_j, CZ_{kl} \rangle \)};
\node (G3) at (0,4) {\( \on{Cl}_n= \langle H_i, S_j, CZ_{kl} \rangle \)};  
\node (G2) at (0,2) {\( G_n= \langle A_1 \otimes A_2\otimes  \cdots \otimes A_n : A_i =\mathbb{1}, X, Y, Z  \rangle \)};  

\draw[double, double distance=1pt, line width=0.5pt] (0,2.2) -- (0,3.8);
\draw[thick] (0,4.2) -- (0,5.8);
\draw[thick] (0,6.2) -- (0,7.8);
\end{tikzpicture}
    \caption{Subgroups of the unitary group relevant to quantum computation. The first is a 
dense subgroup that arises in the proof of quantum universality. The second is 
the finite Clifford group $\on{Cl}_n$, which plays a central role in the 
construction of stabilizer (or Clifford) circuits. The last, $G_n$, is the Pauli 
group—an (almost) extraspecial $2$-group underlying the stabilizer subtheory of 
quantum mechanics. Double lines indicate that $G_n$ is normal in $\on{Cl}_n$.
}
    \label{fig:universality}
\end{figure}
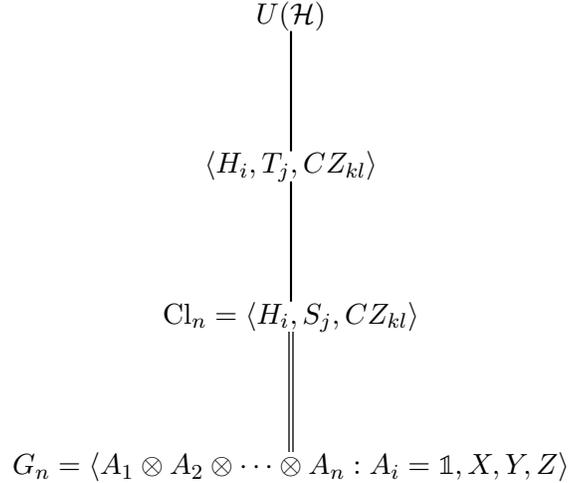

When designing a quantum circuit, that is, expressing a unitary $U$ as a product 
of finitely many elementary generators, one can in general only approximate a given 
unitary. The universality theorem of Kitaev--Solovay \cite{kitaev2002classical} 
states that the subgroup of the unitary group generated by $H_i$, $T_j$, and 
$\on{CZ}_{kl}$, where
\[ 
H=\tfrac{1}{\sqrt 2} \begin{pmatrix} 1 & 1 \\ 1 & -1\end{pmatrix}, \quad 
T=\begin{pmatrix} 1 & 0 \\ 0 & e^{i\pi/4}\end{pmatrix}, \quad
\on{CZ} = \begin{pmatrix} 
1&0&0&0\\
0&1&0&0\\			 
0&0&1&0\\
0&0&0&-1
\end{pmatrix},
\]
is dense in the unitary group. Subscripts specify the qubits on which the 
operators act; for example, the single-qubit unitary $H$, when written as $H_i$, 
denotes the $n$-qubit unitary that acts as the identity on all tensor factors 
except the $i$th, where it acts as $H$. 
In stark contrast, if $T_j$ is replaced with its square $S_j=T_j^2$, the subgroup 
generated by $H_i$, $S_j$, and $\on{CZ}_{kl}$ is a finite subgroup of the 
unitary group known as the Clifford group $\on{Cl}_n$. This group normalizes\footnote{More precisely, the normalizer of $G_n$ in the unitary 
group modulo scalar matrices is isomorphic to the Clifford group.} the 
Pauli group $G_n$ and specifies 
the transformations of stabilizer theory.

The Pauli group can be used to identify a distinguished set of quantum states. 
Elements of $G_n$ that square to the identity will be referred to as 
\emph{Pauli operators}. These operators have eigenvalues in $\set{\pm 1}$. 
Subgroups of $G_n$ that do not contain $-\one$ and are isomorphic to $\ZZ_2^k$ 
for some $1\leq k\leq n$ are called \emph{stabilizer subgroups}. For such a 
stabilizer subgroup $S=\Span{A_1,\dots,A_k}$, generated by $k$ pairwise 
commuting group elements, we can define the \emph{stabilizer projector}
\[
\Pi_{A_1\cdots A_k}^{s_1\cdots s_k} 
= \Pi_{A_1}^{s_1}\Pi_{A_2}^{s_2} \cdots \Pi_{A_k}^{s_k},
\] 
where $s_i\in \ZZ_2$ and $\Pi_A^s = (\one + (-1)^s A)/2$ for each unitary $A$. 
This is the projector onto the simultaneous eigenspace of the operators $A_i$ 
with eigenvalues $(-1)^{s_i}$. A quantum state is called a \emph{stabilizer 
state} if it is given by a stabilizer projector with $k=n$. For example, the 
projection operators onto the canonical basis of $(\CC^2)^{\otimes n}$ are 
stabilizer states corresponding to the subgroup generated by $Z_i$, where $Z_i$ 
is the tensor product of $Z$ on the $i$th factor and identities on the rest. 
Typically, a quantum computation is initiated by one of these computational 
basis states (see Figure~\ref{fig:circuit}).
Stabilizer projectors also give rise to a distinguished class of quantum 
measurements. A measurement associated to a single Pauli operator $A\in G_n$ is called a 
\emph{Pauli measurement}; its projectors are $\Pi_A^0$ and $\Pi_A^1$, projecting 
onto the $\pm 1$ eigenspaces of $A$. More generally, a measurement $\Pi$ is 
called a \emph{stabilizer measurement} if each projector $\Pi^s$ is a stabilizer 
projector. In quantum computation, the standard choice of measurements is the 
projectors onto the computational basis vectors. The \emph{stabilizer theory} is 
the subtheory of quantum mechanics consisting of stabilizer states, stabilizer 
transformations, and stabilizer measurements.

\subsection{Gottesmann--Knill theorem}
\label{sec:gottesmann knill theorem}

A quantum circuit is called a \emph{stabilizer circuit} if all of its components 
belong to the stabilizer theory. In Figure~\ref{fig:circuit}, this corresponds 
to taking $U$ in the Clifford group. The following result is known as the 
Gottesman--Knill (GK) theorem \cite{gottesman1999group22}.

\begin{thm}\label{thm:GK theorem}
Every stabilizer circuit can be efficiently simulated on a classical computer.  
\end{thm}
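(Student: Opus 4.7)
The plan is to construct the stabilizer tableau representation, which stores an $n$-qubit stabilizer state in $O(n^2)$ classical bits and supports efficient updates under Clifford transformations and Pauli measurements. First I would encode a stabilizer state by its stabilizer group. Since $G_n/\{\pm \one\}$ is a symplectic $\ZZ_2$-vector space of dimension $2n$ (with the bilinear form recording the commutation relation), each Pauli operator is captured by a binary vector of length $2n$ together with a sign bit in $\ZZ_2$. A stabilizer state then corresponds to a tableau of $n$ commuting independent generators, occupying $O(n^2)$ bits. The input $\ket{s_1\cdots s_n}$ of Figure \ref{fig:circuit} has tableau rows $(-1)^{s_i}Z_i$ for $i=1,\ldots,n$.

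Second, I would describe the Clifford updates. Because $\on{Cl}_n$ normalizes $G_n$, conjugation by any Clifford gate sends a Pauli to a signed Pauli. It suffices to tabulate the conjugation action on single-qubit Paulis for the generators $H_i$, $S_i$, and $\on{CZ}_{ij}$; applying one such gate then amounts to XORing a bounded number of tableau columns and flipping the appropriate sign bits, costing $O(n)$ time per gate. Third, I would handle Pauli measurements. Given the measured operator $A\in G_n$, compute its commutation relations with each of the $n$ stabilizer generators via the symplectic form, in $O(n^2)$ time. If $A$ commutes with every generator then $\pm A$ is already in the stabilizer subgroup: the outcome is deterministic, its value is read off from the sign via Gaussian elimination, and the tableau is unchanged. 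If $A$ anticommutes with some generator $g_{j_1}$, then the outcome is uniformly distributed on $\ZZ_2$; sample $s$, replace $g_{j_1}$ by $(-1)^s A$, and for every other $j$ with $g_j$ anticommuting with $A$, replace $g_j$ by $g_{j_1}g_j$. Iterating over $T=\on{poly}(n)$ gates and measurements yields a total runtime of $O(Tn^2)$, establishing efficient classical simulability.

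The main obstacle is verifying the correctness of the measurement update. One must show that after the replacement the new set of $n$ operators still pairwise commutes, remains independent, and does not generate $-\one$, and that the resulting distribution over outcomes and post-measurement tableaux reproduces the Born rule \eqref{eq:Born}. The key identity is $\Tr(\rho\, \Pi_A^s) = 1/2$ whenever $\rho$ is stabilized by some generator anticommuting with $A$, which follows from the fact that nontrivial Pauli operators are traceless together with the stabilizer projector formula $\rho = \prod_i \Pi_{g_i}^{0}$. An inductive argument over the sequence of gates and measurements in the stabilizer circuit then shows that the classical outputs sampled by the tableau simulator match the Born distribution of the original quantum circuit exactly, completing the proof.
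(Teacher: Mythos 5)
Your proposal is correct and follows essentially the same route as the paper: the Gottesman--Knill/Aaronson--Gottesman tableau representation of the stabilizer group as $2n$-bit symplectic vectors with sign bits, Clifford updates by conjugation (your per-gate $O(n)$ bookkeeping is the variant the paper relegates to a footnote, versus its row-by-row symplectic matrix action), and the same two-case measurement analysis with deterministic outcomes resolved by Gaussian elimination and random outcomes handled by swapping in the measured operator and multiplying the remaining anticommuting generators. The paper states the measurement update formula~\eqref{eq:stabilizer update} without proof, whereas you correctly identify the tracelessness of nontrivial Pauli operators as the key fact behind the uniform-outcome case; the two presentations are otherwise equivalent.
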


 In the remainder of this section we sketch the idea behind the proof of the 
theorem. As we will see, it relies heavily on group theory. The center of $G_n$ 
is the subgroup $\Span{i\one}$, which is isomorphic to the cyclic group of order 
$4$. The central quotient $E_n$ can be identified with $\ZZ_2^n\times \ZZ_2^n$. 
For an element $a\in E_n$ we write $(a^X,a^Z)$ for its two components. More 
explicitly, for each $a\in E_n$ we can define a Pauli operator
\[
T_{a} = i^{a^X\cdot a^Z} 
   X^{a^X_1}Z^{a^Z_1} \otimes \cdots \otimes X^{a^X_n}Z^{a^Z_n},
\]
where $a^X\cdot a^Z$ denotes the dot product. Any element of $G_n$ can then be 
written as $i^r T_{a}$, and under the quotient map this operator is mapped to 
$a$. In other words, Pauli operators can be represented by $2n$-bit strings, 
modulo the scalar $i^r$. Moreover, the commutation relation of Pauli operators 
is determined by a symplectic form $\omega$ on $E_n$:
\[
T_{a} T_{b} = (-1)^{\omega(a,b)} T_{b}T_{a},
\]
where $\omega(a,b) = a^X\cdot b^Z + b^X\cdot a^Z \pmod 2$. A subspace 
$I\subset E_n$ is called \emph{isotropic} if the restriction of $\omega$ to $I$ 
is zero. It is well known that there is a bijective correspondence between 
elementary abelian $2$-subgroups of $G_n$ and isotropic subspaces of $E_n$, 
given by the central quotient homomorphism. Once an isotropic subspace $I$ is 
fixed, there are $2^{\dim(I)}$ elementary abelian $2$-subgroups mapping to it 
under the quotient. For a proper parametrization of these groups, consider the 
relation
\[
T_{a} T_{b} = (-1)^{\beta(a,b)} T_{a+b}
\]
for pairs $(a,b)$ satisfying $\omega(a,b)=0$. That is, for commuting Pauli 
operators their product is given by the mod $2$ sum of their indices, up to a 
sign. Using this function $\beta$, one can parametrize stabilizer states. A more 
systematic description of $\beta$ can be obtained from the theory of group 
extensions. Observe that the extension class normally takes values in $\ZZ_4$, 
which is determined by the center, but when restricted to commuting pairs it 
takes values in $\ZZ_2$. A formal treatment requires chain complexes on the 
classifying space for commutativity \cite{Coho,OkaySheinbaum}.

\begin{defn}\label{def:value assignment}
A \emph{value assignment} on an isotropic subspace $I\subset E_n$ is a function 
$\gamma:I\to \ZZ_2$ such that 
\[
\gamma(a+b) = \gamma(a)+\gamma(b) + \beta(a,b)
\]
for all $a,b\in I$.
\end{defn}

As a consequence, a stabilizer state corresponding to 
$\Span{A_1,\cdots,A_n}\cong \ZZ_2^n$ can be specified by a sequence of $2n$ bit 
strings via the central quotient homomorphism. More precisely, stabilizer 
projectors can be parametrized by pairs $(I,\gamma)$, where $I$ is an isotropic 
subspace and $\gamma$ is a value assignment. The corresponding projector has the 
form
\[
\Pi_I^\gamma = \frac{1}{|I|} \sum_{a\in I} (-1)^{\gamma(a)} T_{a}.
\]
Stabilizer states correspond to the case $\dim(I)=n$. In stabilizer theory this 
data is organized into a matrix. For $I=\Span{a_1,\dots,a_n}$, the tableau is 
\[
\begin{pmatrix}
a_{1,1}^X & \cdots & a_{1,n}^X & \vline & b_{1,1}^Z & \cdots & b_{1,n}^Z & \vline & r_1 \\
\vdots    &        & \vdots    & \vline & \vdots    &        & \vdots    & \vline & \vdots \\
a_{n,1}^X & \cdots & a_{n,n}^X & \vline & b_{n,1}^Z & \cdots & b_{n,n}^Z & \vline & r_n
\end{pmatrix}
\]
where the $i$th row corresponds to the index $a_i$ of the operator 
$A_i=(-1)^{r_i}T_{a_i}$, with $r_i=\gamma(a_i)$.

A stabilizer circuit has the form shown in Figure~\ref{fig:circuit}, where the 
unitary $U$ belongs to the Clifford group. The first step in simulating such a 
circuit is to understand the action of $U$ in the binary picture. The algorithm 
relies on the following relation, see, e.g., \cite{raussendorf2023role}:
\[
U T_{a}U^\dagger = (-1)^{\phi_U(a)} T_{S(a)},
\]
where $S$ is a $2n\times 2n$ symplectic matrix acting on $E_n$ and 
$\phi:E_n\to \ZZ_2$ is a phase function. Ignoring the latter for simplicity, a 
Clifford unitary $U$ is essentially a symplectic transformation. Therefore, when 
the circuit is initialized in the state $\Pi^{0\cdots 0}$, the projector onto 
$\ket{0}\otimes \cdots \otimes \ket{0}$, the initial tableau is
\[
\begin{pmatrix}
1 &        &        & \vline &   &        &   & \vline &   \\
  & \ddots &        & \vline &   &        &   & \vline &   \\
  &        & 1      & \vline &   &        &   & \vline &   \\
\hline
  &        &        & \vline & 1 &        &   & \vline &   \\
  &        &        & \vline &   & \ddots &   & \vline &   \\
  &        &        & \vline &   &        & 1 & \vline &   
\end{pmatrix}
\]
where zeros are indicated by empty entries.
After the action of a Clifford unitary, each row is updated by applying the 
symplectic transformation $S$. A direct implementation requires $n$ 
matrix--vector multiplications, each costing $O(n^2)$ bit operations, leading to 
a 
total 
complexity\footnote{Aaronson--Gottesman show that each elementary 
Clifford gate ($H$, $S$, $\on{CNOT}$) can be simulated in $O(n)$ time {(per gate complexity)}. 
Since an arbitrary Clifford unitary can be decomposed into $O(n^2)$ elementary gates, 
this yields a total complexity of $O(n^3)$.} of $O(n^3)$.

The circuit terminates with measurements in the computational basis, which are 
stabilizer measurements. Suppose we measure a Pauli operator $B=T_{b}$. Writing 
$r\in \ZZ_2$ for the outcome, the measurement projectors are given by 
$\Pi_{b}^r$. Then, 
\begin{equation}\label{eq:stabilizer update}
\Pi_{b}^r \Pi_I^\gamma \Pi_{b}^r = 
\begin{cases}
\displaystyle\delta_{r,\gamma(b)}\Pi_I^\gamma & b\in I, \\[6pt]
\displaystyle\frac{|I(b)|}{|I|} \,\Pi_{I(b)}^{\gamma\ast r} & \text{otherwise,}
\end{cases}
\end{equation}
where $I(b)=I\cap \Span{b}^\perp$ and 
$J^\perp=\{a\in E_n : \omega(a,b)=0\;\;\forall b\in J\}$ for a subspace $J$. 
By dimensional considerations, if $b\notin I$ then $\dim(I(b))=n-1$ (so 
$|I(b)|/|I|=1/2$).
 We describe the updated 
value assignment $\gamma\ast r$ below. In other words, the formula says that when a 
Pauli measurement is performed on a stabilizer state:
\begin{itemize}
\item {Case I:} If the Pauli operator $B=T_{b}$ to be measured lies in the stabilizer 
group, then the outcome is $\gamma(b)$ with certainty, and the post-measurement 
state is the original stabilizer state.
\item {Case II:} Otherwise, the outcome is $0$ or $1$ with uniform probability. The 
post-measurement state is a new stabilizer state with stabilizer group generated 
by $(-1)^{\gamma\ast r(a)}T_{a}$ for $a\in I(b)$.
\end{itemize}
There is a simple way to compute $I(b)$. First, determine the first generator 
that does not commute with $b$, call it $j$. The generators of $I(b)$ are then 
\[
a_i' = 
\begin{cases}
a_i & i<j, \\
b   & i=j, \\
a_i+b & i>j.
\end{cases}
\] 
The value assignment $\gamma\ast r$ is determined on these generators by
\[
\gamma\ast r(a_i') =
\begin{cases}
\gamma(a_i) & i<j, \\
r & i=j, \\
\gamma(a_i)+r+\beta(a_i,b) & i>j.
\end{cases}
\]


{
In summary, the updates required during measurement are determined by linear 
operations on the tableau. Each update involves additions of two rows, 
costing $O(n)$ bit operations. {Distinguishing between Case~I and Case~II 
can be done by evaluating the symplectic form between $b$ and each generator, 
which requires $O(n^2)$ time.} The dominant cost arises from 
{determining deterministic outcomes in Case~I, which is equivalent to solving 
a system of linear equations over $\ZZ_2$ and can be performed by Gaussian 
elimination in $O(n^3)$ time.} Combining both parts of the algorithm, Clifford 
and measurement updates, the overall time complexity of stabilizer simulation is 
$O(n^3)$. The space complexity is $O(n^2)$, since the tableau has $n$ rows and 
$2n$ columns, together with an additional phase vector.
}

 In \cite{aaronson2004improved}, Aaronson and Gottesman provide an improvement of 
the last step, namely deciding whether the measurement operator belongs to the 
stabilizer of the state, by keeping track of the \emph{destabilizers} of the 
state. The destabilizers form a linearly independent set of vectors lying in the 
orthogonal complement of the isotropic subspace describing the stabilizer state. 
With this improvement, the tableau that describes a stabilizer state grows to 
contain $2n$ rows:
\begin{equation}\label{eq:improved tableau}
\renewcommand{\arraystretch}{1.4}
\begin{pmatrix}
D^X_{n\times n} & \vline & D^Z_{n\times n} & \vline & r_D \\
\hline
S^X_{n\times n} & \vline & S^Z_{n\times n} & \vline & r_S
\end{pmatrix}_{\Large 2n\times (2n+1)}.
\end{equation} 
The upper block corresponds to the destabilizer rows, and the lower block to the 
stabilizer rows.

\begin{rmk}\label{rem:qudit}
The stabilizer theory presented in this section for qubits generalizes in a 
straightforward way to qudits, specified by the $n$-qudit Hilbert space 
$(\CC^d)^{\otimes n}$. Concretely, this amounts to replacing the binary system 
$\ZZ_2=\{0,1\}$ with the $d$-level system 
$\ZZ_d=\{0,1,\dots,d-1\}$. The stabilizer theory is then built with respect to 
the $n$-qudit Pauli group generated by tensor products of the $d\times d$ Pauli 
matrices $X$ and $Z$, which act on computational basis vectors as
\[
X\ket{s} = \ket{s+1}, \qquad
Z\ket{s} = \mu^{s}\ket{s},
\]
where $\mu=e^{2\pi i/d}$. Although the constructions are formally similar, the 
algebraic properties of the Pauli group differ: when $d$ is odd the center of 
the group is isomorphic to $\ZZ_d$, whereas in the even case it is isomorphic to 
$\ZZ_{2d}$. This fundamental difference is reflected in the properties of 
classical simulation and in foundational considerations in quantum theory, such 
as the existence of hidden-variable models 
\cite{howard2014contextuality,delfosse2017equivalence}.
\end{rmk}

\section{Adaptive quantum computation} 
\label{sec:adaptive quantum computation}


To understand quantum computational advantage, it is necessary to look beyond the circuit model. Alternative models introduce an additional feature known as \emph{adaptivity}, whereby a quantum operation (typically a measurement) depends on the outcomes of previous operations. In practice, these operations are quantum measurements, which may be either destructive or non-destructive. A prominent example is the quantum computation with magic states (QCM) model, motivated by the efficient classical simulability of stabilizer circuits. In this section we develop a general computational framework based on adaptive instruments.

\subsection{Adaptive instruments}
\label{sec:adaptive instruments}

A uniform way of formulating quantum operations, including the three components (states, transformations, and measurements) is to use the notion of quantum channels and instruments.
A linear map $\phi:L(\hH)\to L(\kK)$ is called \emph{positive} if it maps positive semi-definite operators to positive semi-definite operators. The map $\phi$ is called \emph{completely positive} if $\phi\otimes \one_{L(\lL)}$ is positive for any Hilbert space $\lL$. A completely positive linear map is called a \emph{quantum channel} if it is trace-preserving.
We will write $\on{CP}(\hH,\kK)$ and $\on{C}(\hH,\kK)$ for completely positive linear maps and channels, respectively.

Quantum measurements and more general quantum operations can be formulated using instruments.
An \emph{instrument} with outcome set $\Sigma$ is a function $\Phi:\Sigma\to \on{CP}(V,W)$ with finite support 
such that $\sum_{s\in \Sigma} \Phi^s$ is a channel. Here we adopt the notation that $\Phi^s=\Phi(s)$.  We can regard an instrument as a quantum channel{
\begin{equation}\label{eq:Phi instrument}
\begin{aligned}
\Phi: L(\hH) &\to L(\kK\otimes \CC\Sigma)\\[1em]
\Phi(A) &= \sum_{s\in \Sigma} \Phi^s(A)\otimes  \ket{s}\bra{s}
\end{aligned}
\end{equation}}
where $\CC\Sigma$ denotes the free vector space with basis vectors $\{\ket{a}\colon a\in \Sigma\}$.

\begin{eg}
\label{ex:instruments}
Key examples of instruments are unitary operators, destructive and non-destructive measurements.
\begin{enumerate}
\item Quantum channels are examples of instruments where $\Sigma =\set{\ast}$. In particular, any unitary operator $U$ specifies a quantum channel 
\[
\Phi(A) = UAU^\dagger.
\]
\item A non-destructive measurement $\Pi:\Sigma \to \on{Proj}(\hH)$ specifies an instrument 
\[
\Phi^s(A) = \Pi^s A \Pi^s.
\]
\item A destructive measurement $\tilde \Pi:\Sigma \to \on{Proj}(\hH_1\otimes \hH_2)$ where $\tilde \Pi^s = \Pi^s \otimes \one_{\hH_2}$ can be regarded as an instrument 
\[
\Phi^s = \Tr_1 \circ \tilde\Phi^s\]
 where $\tilde \Phi^s$ is the instrument associated to the non-destructive measurement $\tilde \Pi$ and $\Tr_1$ is the partial trace over $\hH_1$.
\end{enumerate}
\end{eg}

\begin{defn}\label{def:adaptive instrument}  
An \emph{adaptive instrument} is a function
$$
\Gamma \times \Sigma \to \CP(\hH,\kK)
$$
that sends a pair $(a,s)$, where $a\in \Gamma$ is the input and $s\in \Sigma$ is the output, to a completely positive map $\Phi_a^s:L(\hH)\to {L}(\kK)$ such that $\sum_{s\in \Sigma} \Phi_a^s $ is a quantum channel for every input $a$.
\end{defn}

Note that we can regard an adaptive instrument as a linear map
{\begin{equation}\label{eq:Phi adaptive instrument}
\begin{aligned}
\Phi: L( \hH\otimes \CC\Gamma) &\to L(\kK\otimes \CC\Sigma)\\[1em]
\Phi( A \otimes \ket{a}\bra{b} ) &= \delta_{a,b} \sum_{s\in \Sigma} \Phi_a^s(A) \otimes \ket{s}\bra{s}.
\end{aligned}
\end{equation}}
We can depict adaptive instruments as boxes with two kinds of wires: (1) vertical classical wires, and (2) horizontal quantum wires:
\begin{center}
\begin{tikzpicture}[x=0.5em,y=0.5em]

\Block{0}{0}{$\Phi$}{Phi}

\VIndraw{Phi}{1}{1}{3}
\node[above] at ($(Phi.north)+(0,3)$) {$\Gamma$};

\VOutdraw{Phi}{1}{1}{3}
\node[below] at ($(Phi.south)+(0,-3)$) {$\Sigma$};

\HIndraw{Phi}{1}{1}{3}
\node[left] at ($(Phi.west)+(-3,0)$) {$\cal{H}$};

\HOutdraw{Phi}{1}{1}{3}
\node[right] at ($(Phi.east)+(3,0)$) {$\cal{K}$};

\end{tikzpicture}
\end{center}
The horizontal wires indicate the quantum direction, that is, that direction has input the Hilbert space $\hH$ and output $\kK$. Whereas the vertical direction corresponds to the classical direction. This direction has input $\Gamma$ and output $\Sigma$. Moreover, such boxes can be composed in two directions, horizontally and vertically:
\begin{itemize}
\item The horizontal composition is given by
\begin{center}
\begin{tikzpicture}[x=0.5em,y=0.5em]
\def\xshift{30}
\def\hshift{10} 

\Block{\xshift}{0}{$\Phi$}{L}
\Block{\xshift+\hshift}{0}{$\Psi$}{R} 
 
\HDraw{L}{1}{1}{R}{1}{1}{0}{180}
\VIndraw{L}{1}{1}{3}
\VIndraw{R}{1}{1}{3}  
\HIndraw{L}{1}{1}{3} 
\HOutdraw{R}{1}{1}{3} 
\VOutdraw{L}{1}{1}{3}
\VOutdraw{R}{1}{1}{3}

\node[above] at ($(L.north)+(0,3)$) {$\Gamma_1$};
\node[above] at ($(R.north)+(0,3)$) {$\Gamma_2$};

\node[below] at ($(L.south)+(0,-3)$) {$\Sigma_1$};
\node[below] at ($(R.south)+(0,-3)$) {$\Sigma_2$};

\node[left]  at ($(L.west)+(-3,0)$) {$\cal{H}$};
\node[right] at ($(R.east)+(3,0)$) {$\cal{L}$};

\end{tikzpicture}
\end{center} 
where
\[
	(\Psi\circ \Phi)_{a_1,a_2}^{{s_1,s_2}}=\Psi_{a_2}^{{s_2}}\circ \Phi_{a_1}^{{s_1}} 
	\]

\item The vertical composition is given by 
\begin{center}
\begin{tikzpicture}[x=0.5em,y=0.5em]
\def\xshift{30}
\def\vshift{10} 

\Block{\xshift}{0}{$\Phi$}{T}
\Block{\xshift}{-\vshift}{$\Psi$}{B} 
 
\VDraw{T}{1}{1}{B}{1}{1}{-90}{90}
\VIndraw{T}{1}{1}{3}
\HIndraw{T}{1}{1}{3}  
\HIndraw{B}{1}{1}{3} 
\HOutdraw{T}{1}{1}{3} 
\HOutdraw{B}{1}{1}{3}
\VOutdraw{B}{1}{1}{3}

\node[above] at ($(T.north)+(0,3)$) {$\Gamma$};          
\node[right] at ($(T.east)+(3,0)$) {${{\cal{K}}}_1$};            
\node[right] at ($(B.east)+(3,0)$) {${\cal{K}}_2$};            

\node[left] at ($(T.west)+(-3,0)$) {${\cal{H}}_1$};                 
\node[left] at ($(B.west)+(-3,0)$) {${\cal{H}}_2$};                 

\node[below] at ($(B.south)+(0,-3)$) {$\Theta$};               

\end{tikzpicture}
\end{center} 
where
 \[
(\Psi\bullet \Phi)_{a}^{{s}} = \sum_b \Phi_a^b\otimes \Psi_b^{{s}}
\]
\end{itemize}
More formally, the composition rules are compatible in the sense that instruments form a double category in the sense of \cite{grandis1999limits}. This double-categorical treatment of quantum computation will appear in \cite{double}. {The corresponding pictorial representation is particularly useful for visualizing adaptivity.}

\begin{defn}\label{def:adaptive composition}
Given adaptive instruments $\Phi:\Gamma\times \Sigma \to \CP(\hH,\kK)$ and $\Psi:\Sigma\times \Theta \to \CP(\kK,\lL)$, their \emph{adaptive composition} is the adaptive instrument
$$
\Psi\star \Phi: \Gamma\times (\Sigma\times \Theta) \to \CP(\hH,\lL)
$$
defined by
$$
(\Psi\star \Phi)_a^{s,r} =  \Psi_s^r \circ \Phi_a^s.
$$
\end{defn}

Adaptive composition cannot be decomposed into purely vertical or horizontal compositions, and it can be illustrated pictorially as follows:
\begin{center}
\begin{tikzpicture}[x=0.5em,y=0.5em]
\def\xshift{30}
\def\hshift{10} 
\def\vshift{3}

\Block{\xshift}{0}{$\Phi$}{L}
\Block{\xshift+1.6*\hshift}{0}{$\Psi$}{R} 
\FCircle{\xshift}{-1.5*\vshift}{C}

\HDraw{L}{1}{1}{R}{1}{1}{0}{180}
\VDrawBC{L}{1}{1}{C}{1}{1}{-90}{90}
\VDrawCB{C}{1}{1}{R}{1}{1}{-70}{110}
\VIndraw{L}{1}{1}{3}   
\HIndraw{L}{1}{1}{3} 
\HOutdraw{R}{1}{1}{3}  
\VOutdraw{R}{1}{1}{5}
\VOutdraw{L}{1}{1}{5}

\node[above] at ($(L.north)+(0,3)$) {$\Gamma$};

\node[below] at ($(L.south)+(0,-5)$) {$\Sigma$};
\node[below] at ($(R.south)+(0,-5)$) {$\Theta$};

\node[left]  at ($(L.west)+(-3,0)$) {$\cal{H}$};
\node[right] at ($(R.east)+(3,0)$) {$\cal{L}$};

\end{tikzpicture}
\end{center}

\subsection{Adaptive computation}
\label{sec:adaptive computation}

\begin{defn}\label{def:adaptive computation}
An {\it adaptive quantum computation} consists of   
a sequence of adaptive instruments 
\[
\Phi_0,\Phi_1,\cdots,\Phi_N
\]
satisfying the following properties:
\begin{itemize}
\item each adaptive instrument has the form
$$
\Phi_i:\Gamma_{i}\times \Sigma_i \to \CP(\hH_{i-1},\hH_i),
$$ 
\item the initial instrument satisfies {$\Gamma_{0}=\Sigma_0=\set{\ast}$ and $\hH_0=\CC$},
\item for $i\geq 1$ each input set has the form $\Gamma_i = \Sigma_1 \times \Sigma_2 \times \cdots \times  \Sigma_{i-1}$.  
\end{itemize} 
 We say that the adaptive sequence $(\Phi_1,\cdots,\Phi_N)$ implements the quantum channel $\Phi\in C(\hH_0,\hH_N)$ if
$$
\Phi =  \sum_{s_0,\cdots,s_N}  \Phi^{{s_1}\cdots s_N}
$$
where
\[
\Phi^{{s_1}\cdots s_N}=(\Phi_N \star (  \cdots  \star {(}\Phi_1 \star \Phi_0)\cdots)^{{s_1}\cdots s_N}.
\]
\end{defn}

{Observe that the initial instrument $\Phi_0$ represents a state-preparation.} 
Thus $\rho=\Phi_0(1)$ is the input state of the quantum computation. The output state of the computation is given by 
\[
\rho^{s_1\cdots s_N}= \frac{\Phi^{s_1\cdots s_N}(1)}{\Tr(\Phi^{s_1\cdots s_N}(1))}
\]
when output $s_k$ is observed at instrument $\Phi_k$. Note that we have $s_0=\ast$ in this case. The probability of observing the output sequence $s_1\cdots s_N$ is given by the generalized version of the Born rule
\[
p^{s_1\cdots s_N} =  \Tr(\Phi^{s_1\cdots s_N}(1)).
\]

{
\begin{eg} 
In Section \ref{sec:quantum circuits}, we described how a computation can be modeled by a circuit consisting of a unitary operator $U$ followed by computational basis measurements. Such a process can be expressed as an adaptive computation involving a single instrument:
\[
\Phi^s(A) \;=\; \Pi^s U A U^\dagger \Pi^s,
\]
where $\Pi^s$ denotes the projection operator onto the computational basis vector labeled by $s \in \ZZ_2^n$.
\end{eg}
}

This model of computation, based on adaptive instruments, includes the quantum circuit model,  measurement-based quantum computation (MBQC) \cite{raussendorf2001one}, and quantum computation with magic states (QCM) \cite{bravyi2005universal}. We will describe models that represent the latter two models. Our approach in these constructions uses the diagramatic language of instruments. We specify a list of basic instruments from which any other can be constructed by composing them by gluing the classical/quantum wires.

\subsubsection{Pauli model}

The Pauli model is a particular case of QCM. We will be using the terminology introduced in Section \ref{sec:stabilizer theory}. The basic instruments are as follows:
\begin{itemize}
\item \emph{Resource state preparation} is done by the channel {$\Phi_P\in C(\CC,(\CC^2)^{\otimes n})$}, i.e., instrument with $\Gamma=\Sigma=\set{\ast}$, defined by
\[
\Phi_P(\alpha) = \alpha \rho  
\]  
{The model is universal with the choice of resource state $\rho_T = T \Pi_X^0 T^\dagger$.}
\item \emph{Pauli measurements} are implemented by the adaptive instrument $\Phi_{M}$, i.e.,  $\Gamma=\ZZ_2^k$ and $\Sigma=\ZZ_2$, defined by
\[
(\Phi_{M})_a^s(B) = \Pi_{A_a}^s B \Pi_{A_a}^s.
\] 
for some Pauli operators $A_a$.
\end{itemize}
{Then, an $n$-qubit computation in this model takes the form
\[
\Phi^{s_1\cdots s_N}
= \Phi_N \star \bigl(\cdots \star (\Phi_1 \star \Phi_0)\cdots\bigr)^{s_1\cdots s_N},
\]
where
\begin{itemize}
\item the initial state preparation is given by
\[
\Phi_0 = \underbrace{\Phi_P \bullet \cdots \bullet \Phi_P}_{n\ \text{times}},
\]
\item and for $k \ge 1$,
\[
\Phi_k = \Phi_M^k,
\]
with input set $\ZZ_2^{k-1}$ and output set $\ZZ_2$.
\end{itemize}}
For example, when $N=3$ the adaptive computation takes the form
\begin{center}
\begin{tikzpicture}[x=0.5em,y=0.5em]
\def\xshift{30}
\def\hshift{10} 
\def\vshift{3}

\Block{\xshift-1.6*\hshift}{0}{$\Phi_0$}{LL}
\Block{\xshift}{0}{$\Phi_1$}{L}
\Block{\xshift+1.6*\hshift}{0}{$\Phi_2$}{M} 
\Block{\xshift+3.2*\hshift}{0}{$\Phi_3$}{R} 
\FCircle{\xshift}{-1.5*\vshift}{CL}
\FCircle{\xshift}{-2.5*\vshift}{CLD}
\FCircle{\xshift+1.6*\hshift}{-2.5*\vshift}{CMD}

\HDraw{LL}{1}{1}{L}{1}{1}{0}{180}
\HDraw{L}{1}{1}{M}{1}{1}{0}{180}
\HDraw{M}{1}{1}{R}{1}{1}{0}{180}


\VDrawCB{CL}{1}{1}{M}{1}{1}{-70}{110}
\VDrawCB{CMD}{1}{1}{R}{1}{1}{-70}{110}
\VDrawCB{CLD}{1}{1}{R}{1}{1}{-70}{110}

\HIndraw{L}{1}{1}{3} 
\HOutdraw{R}{1}{1}{3}  
\VOutdraw{R}{1}{1}{8}
\VOutdraw{L}{1}{1}{8}
\VOutdraw{M}{1}{1}{8} 
 



\end{tikzpicture}
\end{center}

\subsubsection{Local Pauli model}

We modify the previous model by restricting to single-qubit Pauli measurements and also making the measurements destructive. The latter means that the measured qubit is discarded. In effect, we implement this by taking a trace (see Example \ref{ex:instruments}). Then the resulting model fits into MBQC.
The basic instruments are as follows:
\begin{itemize}
\item \emph{Resource state preparation} is performed by the channel {$\Phi_P$, as in the Pauli model. The canonical choice ensuring universality is again the state $\rho_T$.}
\item \emph{Entanglement operation} is implemented by a unitary operator associated to a graph $G$ with edge set $E(G)$:
\[
\Phi_E(A) = U_G A U_G^\dagger
\]
where
\[
U_G = \prod_{\set{i,j}\in E(G)} \on{CZ}_{ij}.
\]
\item \emph{Local Pauli measurements} are implemented by the adaptive instrument $\Phi_{M}$, i.e.,  $\Gamma=\Sigma=\ZZ_2$, defined by
\[
(\Phi_{M})_a^s(B) = \on{Tr}_i(\Pi_{A_a}^s B \Pi_{A_a}^s)
\] 
where $A_a\in \set{X_i,Y_i,Z_i}$, and $\on{Tr}_i$ denotes partial trace on the $i$th qubit.
\item \emph{Classical feedforward} is implemented by the instrument $\Phi_L$ defined by
\[
(\Phi_L)_a^s(\alpha) = \delta_{s,f(a)} \alpha
\]
for some affine function $f:\ZZ_2^k\to \ZZ_2$.
\end{itemize}
{An $n$-qubit computation in this model takes the form
\[
\Phi^{s_1\cdots s_N}
= \Phi_N \star \bigl(\cdots \star (\Phi_1 \star \Phi_0)\cdots\bigr)^{s_1\cdots s_N},
\]
where
\begin{itemize}
\item the initial state preparation is given by
\[
\Phi_0 = \Phi_E\circ (\underbrace{\Phi_P \bullet \cdots \bullet \Phi_P}_{n\ \text{times}}),
\]
\item and for $k \ge 1$,
\[
\Phi_k = \Phi_{M}^{k} \bullet \Phi_L^{k-1},
\]
where $\Phi_L^k$ has input set $\ZZ_2^{k-1}$ and output set $\ZZ_2$.
\end{itemize}}
For example, when $N=3$ we have
\begin{center}
\begin{tikzpicture}[x=0.5em,y=0.5em]
\def\xshift{30}
\def\hshift{10} 
\def\vshift{3}
\def\vdel{10}

\Block{\xshift-1.6*\hshift}{0}{$\Phi_0$}{LL}
\Block{\xshift}{0}{$\Phi_1$}{L}

\Block{\xshift+1.6*\hshift}{0-\vdel}{$\Phi_2$}{M} 

\Block{\xshift+3.2*\hshift}{0-2*\vdel}{$\Phi_3$}{R} 

\FCircle{\xshift}{-1.5*\vshift}{CL}
\FCircle{\xshift}{-2.5*\vshift}{CLD}

\FCircle{\xshift+1.6*\hshift}{-2.5*\vshift-\vdel}{CMD}

\HDraw{LL}{3}{1}{L}{1}{1}{0}{180}
\HDraw{LL}{3}{2}{M}{1}{1}{-30}{180+30}
\HDraw{LL}{3}{3}{R}{1}{1}{-60}{180+10} 


\VDrawCB{CL}{1}{1}{M}{1}{1}{-70}{110}
\VDrawCB{CMD}{1}{1}{R}{1}{1}{-70}{110}
\VDrawCB{CLD}{1}{1}{R}{1}{1}{-70}{110}

\HOutdraw{R}{1}{1}{3}  
\VOutdraw{R}{1}{1}{3}
\VOutdraw{L}{1}{1}{20}
\VOutdraw{M}{1}{1}{12} 
 



\end{tikzpicture}
\end{center} 


\section{{Polyhedral} classical simulation} 
\label{sec:polyhedral classical simulation}

Classical simulation methods that extend the Gottesman--Knill stabilizer simulation (described in Section \ref{sec:gottesmann knill theorem}) naturally exhibit a geometric flavor. In this section we introduce a general class of simulators, which we call {polyhedral classical simulators}, encompassing many of the simulation techniques developed in quantum computing. This framework is motivated by the simulation polytopes and the corresponding algorithms introduced in \cite{zurel2020hidden} and \cite{okay2024classical}.

Given a function $p : X \to D(Y)$, we denote by $\tilde{p} : D(X) \to D(Y)$ its convex extension,
\[
\tilde{p}\Bigl(\sum_x \lambda_x \delta^x\Bigr) = \sum_x \lambda_x p(x),
\]
{where $\delta^x$ denotes the Dirac (delta) distribution concentrated at $x$.}

\begin{defn}
Given sets $X$ and $Y$,
a {\it (probabilistic) update map} with outcome $\Sigma$ is a function
$$
q: X \to D(Y\times \Sigma).
$$
\end{defn}

We will write $q_x$ to denote the probability distribution corresponding to state $x\in X$.  
Let $\delta:\Sigma \to L(\CC\Sigma)$ denote the function {$s\mapsto \ket{s}\bra{s}$}.

\begin{defn} 
{A triple $(q;A,B)$ consisting of} an update map $q:X\to D(Y\times \Sigma)$  
{and}
functions
\begin{align*}
A:X &\to \Herm_1(\hH) \\
B:Y &\to \Herm_1(\kK)
\end{align*}
{\it simulates}  
{an}
instrument $\Phi$ if the following diagram commutes:
\[
\begin{tikzcd}[column sep=huge, row sep=large]
X \arrow[r,"q"] \arrow[d,"A"'] &  D(Y\times \Sigma) \arrow[d,"\widetilde{B\times \delta}"] \\
\Herm_1(\hH) \arrow[r,"\Phi"] & \Herm_1(\kK\otimes \CC\Sigma)  
\end{tikzcd}
\]
\end{defn}

{The map $\Phi$ is as given in Equation \ref{eq:Phi instrument}.}

\begin{defn}
\label{def:simulation}
We say that an instrument $\Phi$ {\it preserves}  a pair $(A,B)$ of functions
\begin{align*}
A:X &\to \Herm_1(\hH) \\
B:Y &\to \Herm_1(\kK)
\end{align*}
if for every $s\in \Sigma$, we have $\Tr(\Phi^s(A_x))\geq 0$ and the following two properties hold:
\begin{enumerate}
\item[(1)] $\Tr(\Phi^s(A_x))> 0$ implies
$$
\frac{\Phi^s(A_x)}{\Tr(\Phi^s(A_x))} \in \Conv(\set{B_y:y\in Y}).
$$
\item[(2)] $\Tr(\Phi^s(A_x))=0$ implies
$$
\Phi^s(A_x) =\zero.
$$
\end{enumerate} 
\end{defn}

In the definition we write $A_x$ and $B_y$ for the operators $A(x)$ and $B(y)$, respectively. 

{
\begin{eg}
Let $P$ and $Q$ denote the convex hulls of $\set{A_x : x \in X}$ and $\set{B_y : y \in Y}$, respectively.
A quantum channel $\Phi$ is said to preserve $(A,B)$ if and only if, for every $A \in P$, we have $\Phi(A) \in Q$.
In particular, when $\hH = \CC$, preservation under $\Phi$ reduces to the condition that the state $\rho = \Phi(1)$ lies in $Q$.
\end{eg}
}


\begin{thm}
\label{thm:correctness}
If $\Phi$ preserves $(A,B)$ then there exists an update map $q:X\to D(Y\times \Sigma)$ such that the triple $(q;A,B)$ simulates $\Phi$.
\end{thm}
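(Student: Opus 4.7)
The plan is to construct the update map $q$ directly from the preservation hypothesis, using the convex decompositions guaranteed by condition (1) of Definition \ref{def:simulation}, and then verify both that $q$ lands in probability distributions and that the simulation square commutes.

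First I would unpack what it means for $(q;A,B)$ to simulate $\Phi$. Going around the square, the condition $\widetilde{B\times \delta}(q_x) = \Phi(A_x)$ becomes, after expanding $q_x = \sum_{y,s} q_x(y,s)\,\delta^{(y,s)}$ and using the form \eqref{eq:Phi instrument},
\[
\sum_{y,s} q_x(y,s)\, B_y \otimes \ket{s}\bra{s} \;=\; \sum_{s} \Phi^s(A_x) \otimes \ket{s}\bra{s}.
\]
Orthogonality of $\{\ket{s}\bra{s}\}$ reduces this to the family of equations $\sum_y q_x(y,s) B_y = \Phi^s(A_x)$, one for each $(x,s)$. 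So the task is to realize each operator $\Phi^s(A_x)$ as a specific non-negative combination of the $B_y$'s whose total mass, summed over $(y,s)$, equals $1$.

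Next I would build $q$. Set $t_{x,s} := \Tr(\Phi^s(A_x))$, which is $\geq 0$ by hypothesis. When $t_{x,s} > 0$, condition (1) gives a convex decomposition
\[
\frac{\Phi^s(A_x)}{t_{x,s}} \;=\; \sum_{y\in Y} \lambda_y^{x,s}\, B_y, \qquad \lambda_y^{x,s}\geq 0,\ \sum_y \lambda_y^{x,s}=1,
\]
with finite support; define $q_x(y,s) := t_{x,s}\,\lambda_y^{x,s}$. When $t_{x,s}=0$, condition (2) gives $\Phi^s(A_x)=\zero$, so I set $q_x(y,s):=0$, which trivially satisfies the required equation. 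By construction, $q_x(y,s)\geq 0$ and $\sum_y q_x(y,s) B_y = \Phi^s(A_x)$ in both cases, so the diagram commutes.

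Finally I would check that $q_x$ is a probability distribution on $Y\times \Sigma$. Non-negativity is already in hand. For the total mass, using $\sum_y \lambda_y^{x,s}=1$ whenever $t_{x,s}>0$,
\[
\sum_{y,s} q_x(y,s) \;=\; \sum_{s} t_{x,s} \;=\; \Tr\!\Bigl(\sum_s \Phi^s(A_x)\Bigr) \;=\; \Tr(A_x) \;=\; 1,
\]
where the penultimate equality uses that $\sum_s \Phi^s$ is trace-preserving (since $\Phi$ is an instrument) and the last uses $A_x\in \Herm_1(\hH)$. There is no real obstacle here: the only non-trivial input is the preservation hypothesis, which hands us the required convex decompositions, while the normalization identity falls out of trace-preservation of the underlying channel. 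The mild subtlety worth flagging is that the decomposition is non-unique, so $q$ is not canonical; however, any consistent choice of $\{\lambda_y^{x,s}\}$ (e.g.\ made by a fixed selection rule, or by appealing to Carathéodory's theorem to control support size) produces a valid update map.
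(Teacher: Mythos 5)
Your proposal is correct and follows essentially the same route as the paper's proof: define $q_x(y,s)$ via the convex decomposition from condition (1) when $\Tr(\Phi^s(A_x))>0$ and set it to zero otherwise, then verify commutativity of the square using condition (2). The one place you go beyond the paper is in explicitly checking that $q_x$ is normalized via trace-preservation of $\sum_s\Phi^s$ --- a detail the paper leaves implicit when it asserts $q_x\in D(Y\times\Sigma)$ --- and this is a worthwhile addition rather than a deviation.
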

\begin{proof}
If $\Phi$ preserves $(A,B)$ then we can define an  update map based on property (1) {of Definition \ref{def:simulation}}.
That is, if $\Tr(\Phi^s({A_x}))>0$ we have 
\begin{equation}\label{eq:update expansion}
\Phi^s(A_x) = \sum_{y} q_x(y,s) {B}_y
\end{equation}
for some probability distribution $q_x\in D(Y\times \Sigma)$. We define an update map $q:X\to D(Y\times \Sigma)$ by defining $q_x(y,s)$ using Equation (\ref{eq:update expansion}) if $\Tr(\Phi^s({A_x}))>0$, and zero otherwise.

To show that the resulting update map simulates the instrument,  
 we compute:
\begin{align*}
\widetilde{B\times \delta}\circ q(x) &= \widetilde{B\times \delta}\left( \sum_{y,s} q_x(y,s) {\delta^{y,s}} \right) \\
&= \sum_{s} \left( \sum_y q_x(y,s) {B}_y \right) \otimes \ket{s}\bra{s}  \\
&= {\sum_s} \Phi^s({A_x}) \otimes \ket{s}\bra{s}\\
& = \Phi\circ A(x)
\end{align*}
where we used Property (2) in line three. In details, by definition the sum $\sum_y q_x(y,s) {B_y}$ is equal to $\Phi^s({A_x})$ if $\Tr(\Phi^s({A_x}))>0$, and zero otherwise. When it is zero Property (2) implies that $\Phi^s({A_x})=\zero$, thus we have the desired equality in line three.
\end{proof}

Next, we turn to the adaptive case. An \emph{adaptive update map} is defined as a function
\[
q:X\times \Gamma \to D(Y\times \Sigma).
\]
The update map for $a\in \Gamma$ is denoted by $q_{a}$, and its value at $x\in X$ is denoted by {$q_{x,a}$}. We depict $q$ as a box: 
\begin{center}
\begin{tikzpicture}[x=0.5em,y=0.5em]

\Block{0}{0}{$q$}{Phi}

\VIndraw{Phi}{1}{1}{3}
\node[above] at ($(Phi.north)+(0,3)$) {$\Gamma$};

\VOutdraw{Phi}{1}{1}{3}
\node[below] at ($(Phi.south)+(0,-3)$) {$\Sigma$};

\HIndraw{Phi}{1}{1}{3}
\node[left] at ($(Phi.west)+(-3,0)$) {$X$};

\HOutdraw{Phi}{1}{1}{3}
\node[right] at ($(Phi.east)+(3,0)$) {$Y$};

\end{tikzpicture}
\end{center}
Given two adaptive update maps $q:X\times \Gamma\to D(Y\times \Sigma)$ and $p:Y\times \Sigma \to D(Z\times \Theta)$, we define their \emph{adaptive composition} to be the adaptive update map 
\[
p\star q : X\times \Gamma \to D(Z\times (\Sigma\times \Theta))
\] 
defined by
\[
(p\star q)_{x,a}(z,s,r) = \sum_y p_{y,s}(z,r) q_{x,a}(y,s).
\]
This composition can be depicted as
\begin{center}
\begin{tikzpicture}[x=0.5em,y=0.5em]
\def\xshift{30}
\def\hshift{10} 
\def\vshift{3}

\Block{\xshift}{0}{$q$}{L}
\Block{\xshift+1.6*\hshift}{0}{$p$}{R} 
\FCircle{\xshift}{-1.5*\vshift}{C}

\HDraw{L}{1}{1}{R}{1}{1}{0}{180}
\VDrawBC{L}{1}{1}{C}{1}{1}{-90}{90}
\VDrawCB{C}{1}{1}{R}{1}{1}{-70}{110}
\VIndraw{L}{1}{1}{3}   
\HIndraw{L}{1}{1}{3} 
\HOutdraw{R}{1}{1}{3}  
\VOutdraw{R}{1}{1}{5}
\VOutdraw{L}{1}{1}{5}

\node[above] at ($(L.north)+(0,3)$) {$\Gamma$};

\node[below] at ($(L.south)+(0,-5)$) {$\Sigma$};
\node[below] at ($(R.south)+(0,-5)$) {$\Theta$};

\node[left]  at ($(L.west)+(-3,0)$) {$X$};
\node[right] at ($(R.east)+(3,0)$) {$Z$};

\end{tikzpicture}
\end{center} 

We say that an adaptive instrument $\Phi$ \emph{preserves} $(A,B)$ if each instrument $\Phi_a$ preserves $(A,B)$.
Similarly, a triple $(q;A,B)$ \emph{simulates} an adaptive instrument $\Phi$ if each $q_a$ simulates $\Phi_a$.
This latter definition is equivalent to the commutativity of the following diagram: 
\[
\begin{tikzcd}[column sep=huge, row sep=large]
X\times \Gamma \arrow[r,"q"] \arrow[d,"A\times \delta"'] &  D(Y\times \Sigma) \arrow[d,"\widetilde{B\times \delta}"] \\
\Herm_1(\hH\otimes \CC\Gamma) \arrow[r,"\Phi"] & \Herm_1(\kK\otimes \CC\Sigma)  
\end{tikzcd}
\]

\begin{prop}\label{pro:composition} 
Given functions
\begin{align*}
A: X &\to \Herm_1(\hH) \\
B: Y &\to \Herm_1(\kK) \\
C: Z &\to \Herm_1(\lL) 
\end{align*}
and adaptive instruments 
\begin{align*}
\Phi: \Gamma\times  \Sigma & \to \CP(\hH,\kK)\\
\Psi: \Sigma\times  \Theta & \to \CP(\kK,\lL)
\end{align*}
where $\Phi$ and $\Psi$ preserve $(A,B)$ and $(B,C)$, respectively, the composite $\Psi\circ \Phi$ preserves $(A,C)$. 
Moreover, if $(q;A,B)$ and $(p;B,C)$ simulate $\Phi$ and $\Psi$, respectively, then $(p\star q;A,C)$ simulates the composite $\Psi		{\star}\Phi$, i.e., the following diagram commutes:
\[
\begin{tikzcd}[column sep=huge, row sep=large]
 X \times \Sigma \arrow[r,"p\star q"] \arrow[d,"A\times \delta"'] & D(Z\times (\Sigma \times \Theta)) \arrow[d,"\widetilde{C\times \delta}"]\\
\Herm_1(\hH\otimes \CC\Sigma) \arrow[r,"\Psi\star\Phi"] & \Herm_1(\lL\otimes \CC(\Sigma\times\Theta))
\end{tikzcd}
\]
\end{prop}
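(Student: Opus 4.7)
The plan is to dispatch the proposition in two parts: first establish the preservation statement by chaining the two hypotheses, and then verify the simulation diagram by an explicit calculation that uses the simulation properties of $q$ and $p$ together with linearity of $\Psi_s^r$.

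For the preservation claim, I fix $a \in \Gamma$, $s\in\Sigma$, $r\in\Theta$, and $x\in X$, and split into cases according to $\lambda := \Tr(\Phi_a^s(A_x))$. If $\lambda = 0$, property (2) for $\Phi$ gives $\Phi_a^s(A_x)=\zero$, so $(\Psi\star\Phi)_a^{s,r}(A_x) = \Psi_s^r(\zero) = \zero$ by linearity, and the trace is zero, so both (1) and (2) hold vacuously. If $\lambda > 0$, then by property (1) for $\Phi$ there exist $\mu_y \geq 0$ with $\sum_y \mu_y = 1$ such that $\Phi_a^s(A_x) = \lambda \sum_y \mu_y B_y$. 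Applying $\Psi_s^r$ linearly gives
\[
(\Psi\star\Phi)_a^{s,r}(A_x) = \lambda \sum_y \mu_y \,\Psi_s^r(B_y).
\]
For each $y$, preservation of $(B,C)$ by $\Psi$ says that either $\Psi_s^r(B_y)=\zero$, or $\nu_y := \Tr(\Psi_s^r(B_y)) > 0$ and $\Psi_s^r(B_y)/\nu_y = \sum_z \kappa_{y,z} C_z$ with $\kappa_{y,z}\geq 0$ and $\sum_z \kappa_{y,z}=1$; set $\nu_y:=0$ and $\kappa_{y,z}:=0$ in the former case. Then $\Tr((\Psi\star\Phi)_a^{s,r}(A_x)) = \lambda\sum_y \mu_y \nu_y \geq 0$, and if this sum is positive, normalizing expresses $(\Psi\star\Phi)_a^{s,r}(A_x)$ as a convex combination of $\{C_z\}$ via the coefficients $\mu_y\nu_y\kappa_{y,z}/\sum_{y'}\mu_{y'}\nu_{y'}$. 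If the sum vanishes then $\mu_y\nu_y = 0$ for every $y$, so each summand $\mu_y\Psi_s^r(B_y)$ vanishes (either $\mu_y=0$ or $\nu_y=0$, and the latter forces $\Psi_s^r(B_y)=\zero$ by property (2) for $\Psi$), giving $(\Psi\star\Phi)_a^{s,r}(A_x)=\zero$.

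For the simulation diagram, the commutativity of the simulation squares for $q$ and $p$ translates into the scalar identities $\sum_y q_{x,a}(y,s) B_y = \Phi_a^s(A_x)$ and $\sum_z p_{y,s}(z,r) C_z = \Psi_s^r(B_y)$ (the latter valid even when $\Psi_s^r(B_y) = \zero$, by Theorem \ref{thm:correctness} and property (2)). Then I compute directly:
\begin{align*}
\widetilde{C\times\delta}\bigl((p\star q)_{x,a}\bigr)
&= \sum_{z,s,r} (p\star q)_{x,a}(z,s,r)\, C_z \otimes \ket{s,r}\bra{s,r} \\
&= \sum_{s,r} \sum_y q_{x,a}(y,s) \Bigl(\sum_z p_{y,s}(z,r) C_z\Bigr) \otimes \ket{s,r}\bra{s,r} \\
&= \sum_{s,r} \Psi_s^r\!\Bigl(\sum_y q_{x,a}(y,s) B_y\Bigr) \otimes \ket{s,r}\bra{s,r} \\
&= \sum_{s,r} (\Psi_s^r \circ \Phi_a^s)(A_x) \otimes \ket{s,r}\bra{s,r}
= (\Psi\star\Phi)(A_x \otimes \ket{a}\bra{a}),
\end{align*}
using the definition of $p\star q$ in the first step, linearity of $\Psi_s^r$ in the third, and the defining formula \eqref{eq:Phi adaptive instrument} for $\Psi\star\Phi$ as a linear map in the last step.

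The main technical point is the careful handling of the zero-trace cases in the preservation argument: property (2) of $\Psi$ on individual $B_y$ is exactly what is needed to rule out cancellation between nonzero positive operators after averaging with the $\mu_y$. Once that is settled, the diagram chase is a routine reorganization of sums, provided one is careful to thread the shared label $s$ between $q$ and $p$ correctly, as encoded in the definition of adaptive composition.
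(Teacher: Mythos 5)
Your proof is correct and follows essentially the same route as the paper's: expand $\Phi_a^s(A_x)$ over the $B_y$, push $\Psi_s^r$ through by linearity, re-expand each $\Psi_s^r(B_y)$ over the $C_z$, and use property (2) to rule out cancellation among the zero-trace terms. The only difference is organizational---the paper runs both claims through the update maps $q$ and $p$ in a single computation, whereas you prove preservation directly from the convex coefficients $\mu_y$, $\kappa_{y,z}$ (so that claim does not depend on the simulation hypotheses at all) and then verify the identity $\widetilde{C\times \delta}\circ (p\star q) = (\Psi\star\Phi)\circ (A\times\delta)$ explicitly, a diagram chase the paper leaves as a one-line remark.
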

\begin{proof}
Given $A_x$, we have  
\[
(\Psi{\star}\Phi)_a^{s,r}(A_x)= \Psi_s^r(\Phi_a^s(A_x)) =
\begin{cases}
\sum_y q_{{x,a}}(y,s) \Psi_s^r({B_y})   & \Tr(\Phi_a^s(A_x))>0\\
\zero  & \Tr(\Phi_a^s(A_x))=0.
\end{cases}
\]
Then the condition $\Tr( (\Psi{\star}\Phi)_a^{s,r}(A_x) )\geq 0$ is satisfies. First, assume that the trace is equal to zero. Then either  $\Tr(\Phi_a^s(A_x))=0$ or $\Tr(\Phi_a^s(A_x))>0$ and $\Tr(\Psi_s^r({B_y}))=0$ for all $y$ such that $q_{{x,a}}(y,s)>0$. In both cases we see that $(\Psi{\star}\Phi)_a^{s,r}(A_x)=\zero$. Next, assume that the trace is positive. Then $\Tr(\Phi_a^s(A_x))>0$ and 
\begin{align*}
(\Psi\star\Phi)_a^{s,r}(A_x) &= \sum_y q_{{x,a}}(y,s) \Psi_s^r({B_y})  \\
&= \sum_{y:\,\Tr(\Psi_s^r({B_y}))>0} q_{{x,a}}(y,s) \sum_z p_{{y,s}}(z,r) {C_z} \\
&= \sum_z (p\star q)_{x,a}(z,s,r) {C_z}.
\end{align*}
This equation also implies the commutativity of the diagram concerning the simulation part of the statement.
\end{proof}

\begin{defn}\label{def:classical simulation}
A \emph{classical simulation algorithm} for an adaptive quantum computation 
$\Phi_0, \ldots, \Phi_N$ consists of a sequence of update maps 
\[
q_0, q_1, \ldots, q_N
\]
{and functions
\[
A_0,A_1, \ldots, A_N
\]}
satisfying the following properties:
\begin{itemize}
\item each adaptive update is a map of the form
\[
q_i : X_{i-1} \times \Gamma_{i} \to D(X_{i} \times \Sigma_i),
\]
\item the initial input satisfies $\Gamma_0 = \{\ast\}$,
\item for $i \ge 1$, the input set is given by 
\[
\Gamma_i = \Sigma_1 \times \Sigma_2 \times \cdots \times \Sigma_{i-1},
\]
\item each function $A_i$ is of the form
\[
A_i : X_i \to \Herm_1(\hH_i),
\]
\item {each triple $(q_i; A_{i-1}, A_i)$ simulates $\Phi_i$.}
\end{itemize}
\end{defn}

{Observe that $A_0:X_0\to \Herm_1(\CC)=\set{1}$ is the constant function. Proposition \ref{pro:composition} implies that the composite
\[
q_N\star (\cdots\star(q_1\star q_0)\cdots)
\]
simulates the instrument $\Phi:\Sigma_1\times \cdots \times \Sigma_N\to \on{CP}(\hH_0,\hH_N)$ defined by the composite
\[
\Phi= \Phi_N\star (\cdots\star(\Phi_1\star \Phi_0)\cdots). 
\]
In practice, polyhedral simulators arise directly from Theorem~\ref{thm:correctness} via the preservation property.

\begin{cor}\label{cor:preservation implies algorithm}
Let $(\Phi_0, \ldots, \Phi_N)$ be an adaptive quantum computation, and let 
$A_i : X_i \to \Herm(\hH_i)$ be a collection of functions such that each 
$\Phi_i$ preserves $(A_{i-1}, A_i)$. Then there exists a classical simulation 
algorithm simulating the computation.
\end{cor}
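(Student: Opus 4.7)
The plan is to produce the required update maps $q_i$ directly by applying Theorem \ref{thm:correctness} to each adaptive instrument $\Phi_i$, and then to verify that the resulting sequence meets every bullet of Definition \ref{def:classical simulation}. The argument is essentially bookkeeping; no new analytic input is needed beyond what Theorem \ref{thm:correctness} already supplies.

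First I would fix an index $i$ and recall that, by hypothesis, $\Phi_i$ preserves $(A_{i-1}, A_i)$. By the stated extension of preservation to adaptive instruments, this means that for every $a \in \Gamma_i$ the non-adaptive instrument $(\Phi_i)_a$ preserves $(A_{i-1}, A_i)$. Applying Theorem \ref{thm:correctness} to each $(\Phi_i)_a$ produces an update map $X_{i-1} \to D(X_i \times \Sigma_i)$ such that the resulting triple simulates $(\Phi_i)_a$. Assembling these pointwise over $a \in \Gamma_i$ yields an adaptive update map $q_i : X_{i-1} \times \Gamma_i \to D(X_i \times \Sigma_i)$, and since simulation of an adaptive instrument was defined pointwise in $a$, the triple $(q_i; A_{i-1}, A_i)$ simulates $\Phi_i$.

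Next I would check that the sequence $(q_0, q_1, \ldots, q_N)$ together with the given $(A_0, A_1, \ldots, A_N)$ satisfies the remaining structural conditions of Definition \ref{def:classical simulation}. The input sets $\Gamma_i = \Sigma_1 \times \cdots \times \Sigma_{i-1}$ and the base case $\Gamma_0 = \{\ast\}$ are inherited verbatim from the adaptive computation; the codomains $\Herm_1(\hH_i)$ of the $A_i$ match by hypothesis; and the key simulation condition for each triple was established in the previous step. Thus the conditions hold by inspection.

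The main potential obstacle is the adaptive extension of Theorem \ref{thm:correctness}: one has to see that applying the non-adaptive theorem pointwise in $a$ genuinely assembles into an adaptive simulator rather than merely a parametrized family of non-adaptive simulators. This is immediate from the way both preservation and simulation of adaptive instruments were defined (pointwise in the input parameter), so no further argument is required. As a consistency check, Proposition \ref{pro:composition} applied inductively then guarantees that the iterated adaptive composite $q_N \star (\cdots \star (q_1 \star q_0) \cdots)$ simulates the full composite $\Phi_N \star (\cdots \star (\Phi_1 \star \Phi_0) \cdots)$, confirming the intended operational meaning of the corollary.
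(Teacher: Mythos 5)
Your proposal is correct and follows essentially the same route as the paper: apply Theorem \ref{thm:correctness} pointwise in the adaptive input $a\in\Gamma_i$ to obtain each $q_i$ (using that adaptive preservation and adaptive simulation are both defined pointwise), verify the structural bullets of Definition \ref{def:classical simulation}, and invoke Proposition \ref{pro:composition} inductively to conclude that the iterated adaptive composite simulates the full computation. This matches the paper's intended argument, which it leaves implicit in the paragraph preceding the corollary.
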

}

{As seen in the Pauli and local Pauli cases, universal models of adaptive quantum computation can be constructed from a small set of basic adaptive instruments. Hence, it suffices to verify the preservation property for these basic instruments.}

We begin with two basic simulators. The stabilizer formalism introduced in Section \ref{sec:stabilizer theory} for qubits extends naturally to qudits; see Remark \ref{rem:qudit}. {Next, we consider two simulators that can be used to simulate an adaptive quantum computation in the Pauli model:
\[
(\Phi_0,\Phi_1,\cdots,\Phi_N)
\]
where $\Phi_0$ is the state preparation and for $i\geq 1$, each $\Phi_i$ is a non-destructive Pauli measurement. For $n$ qudits the Hilbert space is $\hH_i = (\CC^d)^{\otimes n}$.
}

\paragraph{Stabilizer simulator.} 
\begin{itemize}
\item $X_i$ consists of pairs $(I,\gamma)$ where $I$ is a maximal isotropic subspace of $E_n=\ZZ_d^{2n}$ and $\gamma:I\to \ZZ_d$ is a value assignment.
\item $A_i:X_i\to \Herm_1(\hH_i)$ sends $(I,\gamma)$ to the projector 
\[
\Pi_I^\gamma = \frac{1}{|I|} \sum_{a\in I} \mu^{\gamma(a)} T_a,
\]
where $\mu = e^{2\pi i/d}$.
\item Update maps are obtained from the generalization of Equation~(\ref{eq:stabilizer update}) to qudits, 
as given in \cite[Eq.~19]{zurel2021hidden}.
\end{itemize}
The simulation polytope in this case is the stabilizer polytope $\on{SP}_n$ given by the convex hull of the stabilizer states.
 
\paragraph{Wigner simulator.} {In this case $d$ is odd.}
\begin{itemize}
\item $X_i$ consists of value assignments $\gamma:E_n\to \ZZ_d$.  
\item $A_i:X_i\to \Herm_1(\hH_i)$ sends $\gamma$ to the phase-point operator 
\[
A^\gamma = \frac{1}{d^n} \sum_{a\in E_n} \mu^{\gamma(a)} T_a.
\]
\item Update maps are described in \cite[Lemma~5]{zurel2024efficient}.
\end{itemize}
The simulation polytope is the Wigner polytope $\on{WP}_n$ given by the convex hull of $A^\gamma$'s. 
 
The stabilizer simulator is essentially the Gottesman--Knill method \cite{gottesman1999group22,aaronson2004improved} expressed in the polyhedral framework. The Wigner simulator, introduced in \cite{veitch2012negative}, is historically important as a successor to the stabilizer simulator. It applies only to qudits of odd local dimension, and its failure for qubits has motivated the development of the simulators that follow. See Figure \ref{fig:polytope-hierarchy} for the relation to other polyhedral simulators. The remaining polytopes appearing in that diagram will be introduced in the course of this section.

\subsection{Extended stabilizer simulator}

Next, we describe a natural extension of the stabilizer theory {and the corresponding simulator for the Pauli model}. The key step is the following generalization of stabilizer state projectors.

\begin{defn}\label{def:cnc}
A subset $\Omega \subset E_n$ is called 
\begin{itemize}
\item \emph{closed} if $a,b\in \Omega$ with $\omega(a,b)=0$ implies $a+b\in \Omega$,
\item \emph{non-contextual} if it admits a value assignment $\gamma:\Omega\to \ZZ_2$, i.e., a function satisfying
\[
\gamma(a+b) = \gamma(a)+\gamma(b)+\beta(a,b)
\]
for all $a,b\in \Omega$ such that $\omega(a,b)=0$.
\end{itemize}
A \emph{closed non-contextual (CNC) set} is a subset that is both closed and non-contextual.
\end{defn}

Every isotropic subspace $I$ is in particular a CNC set. 
Recall that, together with a value assignment $\gamma$, the pair $(I,\gamma)$ specifies a stabilizer projector. 
In the same way, to a CNC set we can associate the operator
\[
A_\Omega^\gamma = \frac{1}{2^n} \sum_{a\in \Omega} (-1)^{\gamma(a)} T_a,
\]
called a \emph{closed non-contextual (CNC) operator}. The set of maximal closed non-contextual operators forms the \emph{phase space}. 
This notion extends naturally to qudits \cite{zurel2024efficient}.

\paragraph{Phase space (CNC) simulator.} This simulation method was first introduced in \cite{raussendorf2020phase}. 
The corresponding tableau algorithm, which extends the stabilizer tableau, is developed recently in \cite{ipek2025phase} and implemented in \cite{CNCSim}. 
Here $\hH_i = (\CC^2)^{\otimes n}$ denotes the $n$-qubit Hilbert space.
\begin{itemize}
\item $X_i$ consists of pairs $(\Omega,\gamma)$ where $\Omega \subset E_n$ is a maximal CNC set and $\gamma:\Omega\to \ZZ_2$ is a value assignment. 
\item $A_i:X_i\to \Herm_1(\hH_i)$ sends $(\Omega,\gamma)$ to the operator $A_\Omega^\gamma$. 
\item Update maps are described in \cite{ipek2025phase}. 
\end{itemize}
The convex hull of the CNC operators is called the CNC polytope, denoted $\on{CP}_n$, which serves as the simulation polytope in this case.

The comparison with the stabilizer tableau method is as follows:  
a maximal closed non-contextual operator $A_\Omega^\gamma$ is represented by a binary tableau
\[
\renewcommand{\arraystretch}{1.4}
\begin{pmatrix}
D^X_{(n-m)\times n} & \vline & D^Z_{(n-m)\times n} & \vline & r_D \\
\hline
S^X_{(n-m)\times n} & \vline & S^Z_{(n-m)\times n} & \vline & r_S \\
\hline
J^X_{2m\times n} & \vline & J^Z_{2m\times n} & \vline & r_J
\end{pmatrix}_{\Large 2n\times (2n+1)},
\]
consisting of destabilizer, stabilizer, and \emph{Jordan--Wigner} parts. 
The last component introduces an additional symplectic structure in the phase space simulator. 
When $m=0$, this tableau reduces to the stabilizer tableau of Equation~(\ref{eq:improved tableau}).

Simulation of Clifford unitaries proceeds in the same way as in the stabilizer case. 
The main novelty arises in the measurement step, where the two cases of the stabilizer simulation refine into four:  
\begin{itemize}
\item Case I: analogous to the deterministic case in the stabilizer tableau,  
\item Case II/III: new (and more involved) probabilistic updates,  
\item Case IV: analogous to the probabilistic case in the stabilizer tableau.  
\end{itemize}
It is shown in \cite{ipek2025phase} that the complexity of this simulation matches 
that of the Aaronson--Gottesman improved algorithm: $O(n^2)$ bits of memory for 
the tableau, $O(n^3)$ time complexity for Clifford unitaries, and 
$O(n^2)$ for Pauli measurements.

\subsection{Universal samplers}

The polyhedral simulators—stabilizer, Wigner, and phase space—efficiently simulate a quantum computation whenever the initial quantum state lies in the corresponding polytope. These polytopes are:  
\begin{itemize}
\item the \emph{stabilizer polytope} $\on{SP}_n$, consisting of convex combinations of $n$-qubit stabilizer states,  
\item the \emph{Wigner polytope} $\on{WP}_n$, consisting of convex combinations of $n$-qubit phase-point operators,  
\item the \emph{CNC polytope} $\on{CP}_n$, consisting of convex combinations of closed non-contextual operators.  
\end{itemize}
When the initial state does not belong to the relevant polytope, one can still perform classical simulation by \emph{estimating} the Born-rule probabilities, following the method of Pashayan et al.~\cite{pashayan2015estimating}. For instance, the implemented phase space simulator \cite{CNCSim} operates on this principle. Such a simulator will be referred to as \emph{efficient estimator}.

The basic idea is that when the initial state lies outside a given polytope, it can no longer be expressed as a convex combination of the polytope’s vertices. Hence, the corresponding probability distribution cannot be obtained directly. Instead, the initial state can be written as a quasi-probabilistic mixture of the vertices:
\[
\rho = \sum_{\alpha} r_\alpha A_\alpha,
\]
where $r_\alpha \in \RR$ and $\sum_\alpha r_\alpha = 1$. One can then use the probability distribution 
\[
p_\alpha = \frac{|r_\alpha|}{\sum_\alpha |r_\alpha|}
\]
to initiate the simulation. The overhead of this simulation scales quadratically in the quantity known as the \emph{robustness}:
\[
\mathfrak{R}(\rho) = \min \sum_\alpha |r_\alpha|,
\]
where the minimum is taken over all decompositions $\rho = \sum_\alpha r_\alpha A_\alpha$. By definition, this quantity depends on the choice of polytope used for the classical simulation. 
This leads to the natural question: Are there polytopes with $\mathfrak{R}(\rho)=1$, equivalently, polytopes that contain every quantum state? Two such polytopes have been described recently. We refer to the corresponding classical simulator as a \emph{universal sampler}.  

 \paragraph{Pauli simulator.}  
The first example of a polytope containing all quantum states was introduced in \cite{zurel2020hidden} for qubits and later generalized to qudits in \cite{zurel2021hidden}. 

\begin{defn}
The \emph{$n$-qubit Pauli polytope}, denoted by $\on{P}_n$, is defined as the dual\footnote{By duality we mean polar duality up to reflection, which coincides with the notion of duality used in generalized probabilistic theories~\cite{plavala2023general}.}
of the stabilizer polytope:
\[
\on{P}_n = \set{ A \in \Herm_1\!\big((\CC^2)^{\otimes n}\big) \;\colon\; \Tr(A\Pi)\geq 0 \;\; \text{for every stabilizer state } \Pi }.
\]
\end{defn}

By construction, every $n$-qubit quantum state lies in $\on{P}_n$. {This polytope serves as a universal sampler for the Pauli model, capable of simulating any adaptive quantum computation expressed in this model.} 
The associated simulation is described as follows: {For $i=1,\cdots,N$,} $\hH_i = (\CC^2)^{\otimes n}$ is the $n$-qubit Hilbert space.
\begin{itemize} 
\item {$X_i$ is a set of labels for the vertices of $\on{P}_n$.}
\item $A_i:X_i\to \Herm_1(\hH_i)$ is the inclusion map.  
\item Update maps are not known in general.  
\end{itemize}
{The classical simulation algorithm can be obtained as a result of Corollary \ref{cor:preservation implies algorithm} once we show that the basic instruments in the Pauli model preserves the $A_i$'s. The following result is preserved in \cite{zurel2020hidden}.

\begin{thm}
Every Pauli measurement $\Phi_M$ preserves $(A,A)$, where $A : X \to \Herm_1((\CC^2)^{\otimes n})$ denotes the inclusion of the label set corresponding to the vertices of $\on{P}_n$.
\end{thm}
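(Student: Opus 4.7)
The plan is to exploit the dual characterization that $A_x \in \on{P}_n$ if and only if $\Tr(A_x \Pi) \geq 0$ for every stabilizer state $\Pi$, combined with the fact that stabilizer theory is closed under Pauli measurements. Both conditions in Definition~\ref{def:simulation} will follow from a single trace identity, with no new machinery beyond what is already developed in Section~\ref{sec:stabilizer theory}.

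I would fix a Pauli operator $A_a$, an outcome $s \in \ZZ_2$, and a vertex $A_x$, and then, for an arbitrary stabilizer state $\Pi$, rewrite the pairing via cyclicity of the trace:
\[
\Tr\!\bigl((\Phi_M)_a^s(A_x)\,\Pi\bigr) \;=\; \Tr\!\bigl(A_x \cdot \Pi_{A_a}^s \Pi \Pi_{A_a}^s\bigr).
\]
I would then invoke the standard stabilizer-measurement update rule $\Pi_{A_a}^s \Pi \Pi_{A_a}^s = p\, \Pi'$, with $\Pi'$ a stabilizer state and $p = \Tr(\Pi_{A_a}^s \Pi) \geq 0$ the Born-rule probability of outcome $s$ on $\Pi$. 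Since $A_x \in \on{P}_n$, this yields $\Tr((\Phi_M)_a^s(A_x)\,\Pi) = p\,\Tr(A_x \Pi') \geq 0$ for every stabilizer state $\Pi$. Specializing to $\Pi = \one/2^n$ (a convex combination of, say, the computational basis projectors) gives the required nonnegativity of $\Tr((\Phi_M)_a^s(A_x))$.

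For property (1), when the trace is strictly positive, the normalization $(\Phi_M)_a^s(A_x)/\Tr((\Phi_M)_a^s(A_x))$ is a trace-$1$ Hermitian operator whose pairing with every stabilizer state is nonnegative; by the very definition of $\on{P}_n$, it lies in $\on{P}_n = \Conv(\{A_y : y \in X\})$. For property (2), I would use the identity $\one/2^n = (1/N)\sum_{\Pi} \Pi$, where the sum runs over all $N$ many $n$-qubit stabilizer states; this follows from the trace-orthogonality of the Pauli operators, since every non-identity term cancels in the sum. Combined with the vanishing total trace and the termwise nonnegativity established above, this forces $\Tr((\Phi_M)_a^s(A_x)\,\Pi) = 0$ for every stabilizer state $\Pi$. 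Finally, the identity $T_b = 2\Pi_b^0 - \one$ places every Pauli operator in the linear span of stabilizer states, so stabilizer states span $\Herm((\CC^2)^{\otimes n})$; hence $(\Phi_M)_a^s(A_x) = \zero$.

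The main conceptual content is the recognition that the argument is a direct consequence of the polar duality between $\on{P}_n$ and $\on{SP}_n$ together with the self-duality of stabilizer theory under Pauli measurements. The sole nonroutine ingredient is the spanning claim invoked in property (2): although stabilizer states form a proper subset of density operators, their linear span fills the full real vector space of Hermitian operators on $(\CC^2)^{\otimes n}$.
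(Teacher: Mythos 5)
The paper does not prove this theorem itself---it defers to \cite{zurel2020hidden}---so there is no in-paper proof to compare against; your argument is correct and is essentially the standard duality argument from that reference: move the measurement projectors onto the stabilizer state by cyclicity of the trace, and use that a Pauli measurement sends a stabilizer state to a nonnegative multiple of a stabilizer state, so that membership in $\on{P}_n$ is preserved by polar duality. The one imprecision is in your spanning claim for property (2): for $n>1$ the operator $\Pi_b^0$ is a rank-$2^{n-1}$ stabilizer \emph{projector}, not a stabilizer state, so $T_b=2\Pi_b^0-\one$ alone does not yet place $T_b$ in the span of stabilizer states; you need the additional (standard) observation that any stabilizer projector $\Pi_I^\gamma$ equals the sum of the $2^{n-\dim I}$ stabilizer states obtained by extending $(I,\gamma)$ to a maximal isotropic subspace, after which the Pauli operators, and hence all of $\Herm((\CC^2)^{\otimes n})$, lie in that span and your conclusion $(\Phi_M)_a^s(A_x)=\zero$ follows.
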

}

The CNC polytope is contained within the Pauli polytope. Moreover, the maximal CNC operators---the vertices of $\on{CP}_n$---are also vertices of the Pauli polytope. In this special case, the description of the vertices and their update rules is therefore known. Identifying new vertices of the Pauli polytope and determining their updates under Pauli measurements remains an active area of research.

\paragraph{Local Pauli simulator} 
Another polytope that contains all $n$-qubit quantum states was recently introduced in \cite{okay2024classical}. 
This polytope is defined as the dual of the \emph{local stabilizer polytope}. 
A stabilizer state is called \emph{local} if its stabilizer group is generated by pairwise commuting single-qubit Pauli operators. {More precisely, such a stabilizer group can be written as $\Span{A_1,\cdots,A_n}$ where each $A_i\in \set{X_i,Y_i,Z_i}$.}

\begin{defn}
The \emph{$n$-qubit local Pauli polytope}, denoted by $\on{LP}_n$, is defined as the dual of the local stabilizer polytope:
\[
\on{LP}_n = \set{A\in \Herm_1\!\big((\CC^2)^{\otimes n}\big) \;\colon\; \Tr(A\Pi)\geq 0 \;\; \text{for every local stabilizer state } \Pi }.
\]
\end{defn}

By definition, $\on{P}_n$ is contained in $\on{LP}_n$. {Using this polytope, we can simulate any adaptive quantum computation in the local Pauli model, thereby obtaining a universal sampler for this computational model.}
The corresponding simulation algorithm is as follows:  
{For $i=1,\cdots,N$, $\hH_i = (\CC^2)^{\otimes n-i+1}$, the $n-i+1$-qubit} Hilbert space.
\begin{itemize} 
\item {$X_i$ is a set of labels for the vertices of $\on{LP}_{n-i+1}$.}
\item $A_i:X_i\to \Herm_1(\hH_i)$ is the inclusion map.  
\item Update maps are not known in general.  
\end{itemize}
{Observe that the number of qubits decreases as a result of the use of destructive measurements. The classical simulation algorithm is obtained from Corollary \ref{cor:preservation implies algorithm} and the following result proved in \cite{okay2024classical}.

\begin{thm}
Every local Pauli measurement $\Phi_M$ preserves $(A_0,A_1)$, where $A_i : X \to \Herm_1((\CC^2)^{\otimes n-i})$ denotes the inclusion of the label set corresponding to the vertices of $\on{LP}_{n-i}$.
\end{thm}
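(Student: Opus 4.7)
The plan is to verify the preservation property in Definition~\ref{def:simulation} directly, using the defining duality
\[
\on{LP}_m \;=\; \set{\, M \in \Herm_1((\CC^2)^{\otimes m}) \,:\, \Tr(M\,\Pi) \geq 0 \text{ for all local stabilizer states } \Pi\,}.
\]
The key technical observation I would exploit is that the single-qubit projector $P_a^s := \Pi_{A_a}^s$ onto the $(-1)^s$-eigenspace of $A_a \in \set{X_i,Y_i,Z_i}$ is itself a single-qubit stabilizer state. Hence, for every local stabilizer state $\Pi' \in \on{LSP}_{n-1}$ on the remaining $n-1$ qubits, the tensor product $P_a^s \otimes \Pi'$ belongs to $\on{LSP}_n$. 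This single closure property drives the entire argument.

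For condition~(1) of Definition~\ref{def:simulation}, fix a vertex $V = A_0(x)$ of $\on{LP}_n$ and set $C = (\Phi_M)_a^s(V) = \Tr_i(P_a^s\, V\, P_a^s)$. Using cyclicity of the trace together with the fact that $P_a^s$ commutes with $\one_i \otimes \Pi'$, I compute
\[
\Tr(C\cdot \Pi') \;=\; \Tr\!\bigl(V\cdot(P_a^s\otimes \Pi')\bigr) \;\geq\; 0
\]
for every $\Pi'\in\on{LSP}_{n-1}$, by the closure observation and $V\in\on{LP}_n$. Taking $\Pi' = \ket{k}\bra{k}$ and summing over the $2^{n-1}$ computational basis projectors yields $\Tr(C)\geq 0$, the prerequisite non-negativity. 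When $\Tr(C)>0$, the normalized operator $C/\Tr(C)$ is a unit-trace Hermitian operator whose trace against every $\Pi'\in\on{LSP}_{n-1}$ is non-negative; by the definition of $\on{LP}_{n-1}$ it therefore lies in $\on{LP}_{n-1}=\Conv\set{A_1(y):y\in X_1}$, which is exactly what condition~(1) demands.

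For condition~(2), suppose $\Tr(C)=0$. The decomposition $\one^{\otimes n-1}=\sum_k \ket{k}\bra{k}$ forces each non-negative summand $\Tr(C\cdot\ket{k}\bra{k})$ to vanish. Repeating this argument after replacing, on any chosen subset of qubits, the computational decomposition by the $X$- or $Y$-eigenstate decompositions $\one = P_X^+ + P_X^-$ or $\one = P_Y^+ + P_Y^-$---each two-term identity again producing local stabilizer states---forces $\Tr(C\,\Pi')=0$ for every local stabilizer product state $\Pi'$. Since the six single-qubit Pauli eigenstates linearly span $L(\CC^2)$, their tensor products span $L((\CC^2)^{\otimes n-1})$, and one concludes $C=0$.

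The main obstacle I anticipate is bookkeeping rather than conceptual difficulty: one must carefully track that iterating the two-term single-qubit identity-decompositions across the $n-1$ remaining qubits yields enough local-stabilizer traces of $C$ to pin it down, which reduces to the standard spanning property of the six single-qubit Pauli eigenstates in $L(\CC^2)$. Modulo this, the whole proof boils down to the closure identity $P_a^s\otimes\Pi' \in \on{LSP}_n$ and the polar duality defining $\on{LP}_n$.
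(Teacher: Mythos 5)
Your argument is correct and self-contained. Note that the paper does not actually prove this theorem in the text---it defers to \cite{okay2024classical}---so there is no in-paper proof to compare against; but the route you take is the natural one that the duality definition of $\on{LP}_n$ invites. The two load-bearing steps both check out: the partial-trace/cyclicity identity $\Tr\bigl(\Tr_i(P_a^s V P_a^s)\,\Pi'\bigr)=\Tr\bigl(V\,(P_a^s\otimes \Pi')\bigr)$, and the closure fact that $P_a^s\otimes \Pi'$ is again a local stabilizer state (since local stabilizer states are exactly the products of single-qubit Pauli eigenprojectors, per the paper's definition). Your condition~(2) argument is also complete: the $3^{n-1}$ resolutions of the identity into local stabilizer product states give non-negative summands summing to $\Tr(C)=0$, forcing $\Tr(C\,\Pi')=0$ on all $6^{n-1}$ Pauli-eigenstate products, which span $L((\CC^2)^{\otimes n-1})$.

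One point you use silently and should state: in condition~(1) you pass from ``$C/\Tr(C)$ satisfies all the defining inequalities of $\on{LP}_{n-1}$'' to ``$C/\Tr(C)\in \Conv\{A_1(y)\}$.'' This requires that $\on{LP}_{n-1}$ is a \emph{bounded} polyhedron, hence equal to the convex hull of its finitely many vertices; a polar dual of a polytope is in general only a polyhedron. Boundedness holds here because the local stabilizer polytope is full-dimensional in the affine space $\Herm_1$ and contains the maximally mixed state in its relative interior, but since the entire force of condition~(1) rests on this identification, it deserves an explicit sentence (or a citation to where the vertex description of $\on{LP}_m$ is established).
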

}

The local Pauli polytope consists of operators that are local versions of CNC operators, referred to as \emph{locally closed operators}. Two classes of vertices have been identified in the corresponding polytope, the \emph{locally closed polytope} $\on{LC}_n$:
 the \emph{deterministic} vertices and the \emph{max-weight} vertices.
The corresponding simulation polytopes are denoted by $\on{DP}_n$ and $\on{MP}_n$. See Figure \ref{fig:polytope-hierarchy} for the relationships among the polytopes introduced so far.
 In \cite{okay2024classical} it is shown that the robustness measure for simulations in the local Pauli simulator is lower than that for simulations based on the Pauli simulator. This suggests that the local Pauli simulator provides a more efficient method of classical simulation than the Pauli simulator.


	\bibliography{bib.bib}
\bibliographystyle{ieeetr}
	
\end{document}